\newtheorem{theorem}{\sc Theorem}
\newtheorem{proposition}{\sc Proposition}
\newtheorem{lemma}{\sc Lemma}
\newtheorem{claim}{\sc Claim}
\newcommand{\argmax}{\mathop{\rm argmax}}
\newcommand{\sgn}{\mathop{\rm sgn}}
\newtheorem{definition}{\sc Definition}
\newtheorem{example}{\sc Example}
\begin{document}
\title{A dual approach to nonparametric characterization\\ for random utility models}
\author{Nobuo Koida\thanks{Faculty of Policy Studies, Iwate Prefectural University: nobuo@iwate-pu.ac.jp}\,  and Koji Shirai\thanks{School of Economics, Kwansei Gakuin University: kshirai1985@kwansei.ac.jp\newline \emph{The first version: March 7, 2024}. For helpful discussions and comments, the authors are grateful to V. Aguiar, C. Hara, Y. Higashi, N. Takeoka, and C. Turansick. We also thank the audience at North American Summer Meeting of Econometric Society (NASMES 2024, Vanderbilt) for helpful discussions.}}
\date{\today}
\maketitle

\abstract{This paper develops a novel characterization for random utility models (RUM), which turns out to be a dual representation of the characterization by Kitamura and Stoye (2018, ECMA). For a given family of budgets  and its ``patch" representation \emph{\'{a} l{a}} Kitamura and Stoye, we construct a matrix $\Xi$ of which each row vector indicates the structure of possible revealed preference relations in each subfamily of budgets. Then, it is shown that a stochastic demand system on the patches of budget lines, say $\pi$, is consistent with a RUM, if and only if $\Xi\pi\geq \mathbbm{1}$, where the RHS is the vector of $1$'s. In addition to providing a concise quantifier-free characterization, especially when $\pi$ is inconsistent with RUMs, the vector $\Xi\pi$ also contains information concerning (1) sub-families of budgets in which cyclical choices must occur with positive probabilities, and (2) the maximal possible weight on rational choice patterns in a population. The notion of Chv\'{a}tal rank of polytopes and the duality theorem in linear programming play key roles to obtain these results.}\vspace{0.2in}

\noindent
{\bf JEL Classification.} C02, D11, D12\\
{\bf Keywords:} {\sc Random utility model}; {\sc Revealed preference}; {\sc Strong axiom of revealed preference}; {\sc Linear programming}; {\sc Duality theorem}; {\sc Chv\'{a}tal rank}
\vspace{-0.1in}

\clearpage

\onehalfspacing
\section{Introduction}

In the literature of random utility models (RUM), Kitamura and Stoye (2018) (henceforth, KS) has established a powerful and tractable analytical tool for nonparametric demand analysis. Based on several fundamental results in revealed preference theory by Afriat (1967), Varian (1982) and McFadden and Richter (1990), they provide an insightful characterization for stochastic demand systems consistent with a RUM, as well as a statistical procedure for testing it based on empirical data. Afterward, in the literature, their approach has been further developed and turned out to be useful in various models. For example, Smeulders, Cherchye and De Rock (2021) develops an efficient computation technique for implementing the analysis in KS, while the papers including Aguiar, Gautheir, Kashaev and Plavala (2023), Deb, Kitamura, Quah and Stoye (2023) and Lazzati, Quah and Shirai (2024) show the applicability of Kitamura and Stoye's approach beyond the classical consumer theory, with some methodological contributions being also made in each of them.\footnote{Aguiar et al. (2023) deals with a dynamic consumption model, while Deb et al. (2023) works on the model of price preferences. Lazzati et al. (2024) applies KS' approach to game theoretical setting. }

Subsequent to these works, in this paper, we evolve the approach of KS from a theoretical perspective, especially in the framework of  consumer theory. To be specific, this paper provides a novel necessary and sufficient condition for a stochastic demand system to be consistent with RUMs. Our characterization turns out to be a dual representation of that by KS, which has several attractive features. In terms of a formal aspect, we obtain a concise and easy-to-interpret quantifier-free condition, rather than solvability/satisfiability type conditions.\footnote{In the framework of abstract choice theory, by Block and Marschack (1960), the famous (quantifier-free) characterization, so called \emph{Block-Marschack polynominals}, is known. Their condition is an extension of the monotonicity of choice frequencies with respect to the set inclusion relation on choice sets, rather than the structure of revealed preference relation. See Kono, Saito and Sandroni (2023) and Turansick (2024) for recent development in Block-Marschack type argument.} As a benefit from our representation, when a given stochastic demand system is inconsistent with RUMs, it simultaneously detects ``where" the consistency breaks down. Furthermore, using the duality with the characterization by KS, it is also shown that our condition identifies ``to what extent" a given stochastic demand system is (in)consistent with RUMs.

From a technical viewpoint, KS characterizes the set of RUM-consistent stochastic demand systems as a polytope of which vertices are deterministic choice patterns obeying the {strong axiom of revealed preference (SARP)}. Since the polytope is described by using the set of vertices, the characterization in KS is called a ${\cal V}$-representation. On the other hand, in this paper, we characterize the same polytope in terms of the set of hyperplanes generating it, which is often referred to as an ${\cal H}$-representation. Once a ${\cal V}$-representation is obtained, by Minkowski-Weyl duality, it is \emph{theoretically} straightforward that there exists an ${\cal H}$-representation. However, as KS pointed out, this connection is purely theoretical and it is quite nontrivial to explicitly obtain it from the ${\cal V}$-representation by KS, let alone its economic implication. Indeed, the potential benefits from having an ${\cal H}$-representation are recognized in the literature, but it has not obtained beyond some specific numerical examples (see, for example, Aguiar et al. (2023) as well as KS).

Instead of starting from the ${\cal V}$-representation by KS, we directly construct a set of hyperplanes that captures observable restrictions from utility maximizing behavior, and rediscover the polytope of RUM-consistent demand systems by using it. More precisely, a keystone of our approach is constructing a matrix that captures the structure of revealed preference relations across budgets. This matrix is formed so that it provides a quantifier-free characterization for a (deterministic) choice pattern to obey SARP; that is, a set of hyperplanes determining SARP-consistent consumption patterns is specified. Then, we show that the same set of hyperplanes in fact generates the polytope corresponding to the set of stochastic demand systems consistent with RUMs. This means that the set of vertices of that polytope coincides with the set of SARP-consistent choice patterns. Since deterministic choice patterns are represented as binary vectors in our setting, the above property in turn corresponds to the \emph{integrality} of polytopes. To prove it, we employ the notion of \emph{{Chv\'{a}tal rank}}, which is a well-known concept in integer programming for, loosely speaking, measuring the degree of non-integrality of a given polytope. We prove that Chv\'{a}tal rank of the above polytope is equal to $0$, which is equivalent to the integrality of that polytope.\footnote{To the best of the authors' knowledge, this is the first attempt to apply the notion of Chv\'{a}tal rank in the literature of revealed preference theory, despite many successful applications of integer programming there. } 

It should be also noted that our approach can be interpreted as an extension of that in Hoderlein and Stoye (2014), which characterizes the set of stochastic demand systems consistent with the {weak axiom of revealed preference (WARP)}. Their characterization is also given as a quantifier-free linear condition, and in fact it can be captured as a \emph{subsystem} of our necessary and sufficient condition for RUMs. In this aspect, the current paper bridges between the condition for WARP-consistent stochastic choices by Hoderlein and Stoye (2014) and that for SARP-consistent stochastic choices by KS. Despite characterizing closely related models, the connection between these conditions has not necessarily been clear.

The rest of this paper is arranged as follows. In Section 2, we introduce the basic setting in the paper, and briefly explain the characterization for RUMs by KS. Then, in Section 3.1, our alternative characterization is established. There, it is also shown that, when a given demand system is inconsistent with RUMs, our necessary and sufficient condition specifies subfamilies of budgets where cyclical choices are crucial. We raise several numerical examples in Section 3.2, and proceed to the proof of the characterization theorem in Section 3.3. In Section 4.1, we formally show the duality between our characterization and that by KS, which immediately uncovers the connection between our characterization and the identification of the maximal fraction of rational choices. Some numerical examples are given in Section 4.2, and the proofs for the results in Section 4.1 are contained in Section 4.3. Lastly, in Section 5, we conclude the paper, with referring to some possible directions of future researches.

\section{Rationalizability of random consumption}

Throughout this paper, we follow the framework of KS, which is based on the classical consumer model. Suppose that there are $n$ $(\geq 2)$ commodities for which nonnegative consumption levels are allowed under positive price vectors. An increasing utility function is denoted by $u:\mathbb{R}^n_+\rightarrow \mathbb{R}$, and a \emph{random utility model (RUM)} is defined as a distribution of these utility functions, which is in turn denoted by $\Phi$. 


Our first objective is to characterize the observable restrictions from RUMs on choice behavior on finitely many budgets. Suppose that there are $J<\infty$ fixed budgets $B_j=\{y\in \mathbb{R}^n_+: p_j\cdot y=1\}$, where $p_j=(p_{j1},...,p_{jn})\in \mathbb{R}^n_{++}$ is a positive price vector for $j=1,2,...,J$. We also assume that cross-sectional distribution of demand corresponding to these budgets are observed; that is, we work with a population distribution, rather than any kind of empirical data. Denoting a distribution of demand on each ${\cal B}_j$ by $P_j(S)$ for $S\subset {\cal B}_j$, we call a profile of them, say, $P=(P_1,P_2,...P_J)$ as a \emph{stochastic demand system}. The consistency of it with RUMs is defined as follows.

\begin{definition}
A stochastic demand system $P=(P_1,P_2,...,P_J)$ is rationalizable, if there exists a RUM $\Phi$ such that\begin{align}\label{def1}
P_j(S)=\int {\bf 1}\left(\argmax_{y\in B_j}u(y)\in S \right)d\Phi(u),\mbox{ for every }S\subset {\cal B}_j\mbox{ and }j=1,2,...,J.
\end{align}
\end{definition}\vspace{0.05in}

KS established a simple, but insightful geometric approach for characterizing the above defined rationalizability. A key idea for that is making \emph{patches} of budget lines, using a kind of equivalent classes with respect to the direct revealed preference relations. To be specific, each budget set $B_j$ is divided into patches $\left(B_{j1},B_{j2},...,B_{jI_j}\right)$ defined such that\begin{align}
\sgn(p_{j'}\cdot y'-1) &=\sgn(p_{j'}\cdot y''-1)\mbox{ for all }j'\neq  j\\
&\iff y',y''\in B_j\mbox{ are contained in the same patch }B_{ji}.\notag
\end{align}
As seen from the definition, consumption vectors obtained from the same set of patches would derive the same direct revealed preference relations. In what follows, we always assume that \emph{any intersection of budget lines is not chosen with any positive probability}, which is ensured if distributions of demands are continuous. By this assumption, as argued in KS, it suffices to consider patches belonging to a single budget set. Figure \ref{fig_kspatch} visualizes the construction of patches on budgets, where each patch excludes the intersection of two budget lines. In what follows, let $I=\sum^J_{j=1}I_j$; that is, $I$ is the total number of patches.

\begin{figure}[tb]
\centering
\includegraphics[keepaspectratio, scale=0.4]{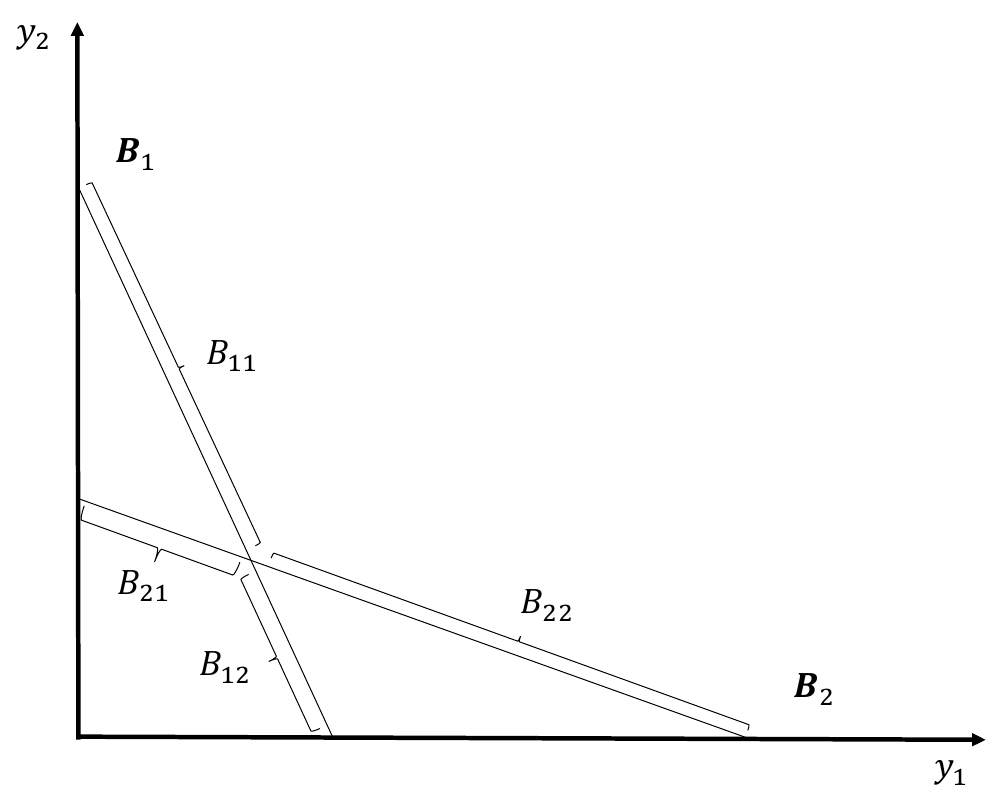}
\caption{Patches on budget lines.}
\label{fig_kspatch}
\end{figure}

Using the notion of patches, we obtain the \emph{vector representation} of a stochastic demand system as \begin{align}
\pi = (\pi_1,\pi_2,...,\pi_J)=(\pi_{11},\pi_{12},...,\pi_{1I_1};...,;\pi_{J1},...,\pi_{JI_J}),
\end{align}
where each $\pi_j:=\left(\pi_{j1},\pi_{j2},...,\pi_{jI_j}\right)$ is a probability vector on $\left(B_{j1},B_{j2},...,B_{jI_j}\right)$, and hence, each $\pi_{ji}$ stands for a probability mass put on the patch $B_{ji}$.  (Note that, in the right most side, the semicolons indicate the separations of budgets, and we use this notation throughout this paper.) In fact, the rationalizability of a stochastic demand system can be tested through the property of its vector representation. Specifically, a stochastic demand system $P$ is rationalizable, if and only if the corresponding $\pi$ is represented as a convex combination of rational non-stochastic choice patterns explained below. This fact is extensively used also in our approach.

A deterministic choice pattern, referred to as a \emph{behavioral types}, is formally defined as \begin{align}
a=(a_{11},a_{12},...,a_{1I_1};...;a_{J1},a_{J2},...,a_{JI_J}),
\end{align}
where each $a_j:=\left(a_{j1},a_{j2},...,a_{jI_j}\right)$ is a binary vector with $\sum^{I_j}_{i=1} a_{ji}=1$. That is, each $a_j$ specifies one and only one patch $B_{ji}$, which can be interpreted as a choice from $B_j$. Abusing notation, let $a(B_j)=\left\{B_{ji}:a_{ji}=1\right\}$ and define the \emph{direct revealed preference} $\succ^R$ such that\begin{align}\label{drp}
a(B_{j''})\succ^R a(B_{j'}),\mbox{ if } y''\in a(B_{j''}),y'\in a(B_{j'})\Longrightarrow p_{j''}\cdot y'-1<0.
\end{align}
Since we do not consider the intersections of budget lines, it suffices to consider the case of a strict inequality. When it holds that for some ${\cal J}:=\{j_1,j_2,...,j_l\}\subset \{1,2,...,J\}$,\begin{align}\label{cycle_wrt}
a(B_{j_1})\succ^R a(B_{j_2})\succ^R\cdots \succ^R a(B_{j_l})\succ^R a(B_{j_1}),
\end{align} 
we say that the behavioral type has a \emph{revealed preference cycle}. A behavioral type is \emph{rationalizable}, if it obeys the \emph{strong axiom of revealed preference (SARP)} in the sense that it does not have any revealed preference cycle.\footnote{In general, the rationalizability of (deterministic) consumer choices is characterized as the generalized axiom of revealed preference (GARP), which is slightly weaker than SARP (see, Afriat (1967) and Varian (1982)). Nevertheless, these two notions coincide under our assumption excluding the demand at the intersection of budgets. }

Let ${\cal A}$ be the set of all behavioral types, and ${\cal A}^*\subset {\cal A}$ be the set of all rationalizable behavioral types. Similarly, let $A$ be the matrix of which the set of column vectors is equal to ${\cal A}$, and $A^*$ be the matrix of which the set of column vectors is equal to ${\cal A}^*$. Then, Kitamura and Stoye's characterization theorem is given as follows.

\setcounter{theorem}{-1}
\begin{theorem}\label{ks}
A stochastic demand system $P=(P_1,P_2,...,P_J)$ is rationalizable, if and only if its vector representation $\pi=(\pi_1,\pi_2,...,\pi_J)$ is represented as $\pi=A^*\tau^*$ for some $\tau^*\geq 0$. 
\end{theorem}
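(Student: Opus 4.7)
The plan is to establish the equivalence by interpreting the RUM as a mixture over \emph{behavioral types}, using the classical revealed preference machinery of Afriat (1967) and the aggregation argument in the style of McFadden and Richter (1990). The key observation is that, under our assumption excluding choice on intersections of budget lines, any utility function $u$ (with generically unique maximizers) induces a well-defined behavioral type $a(u)\in {\cal A}$ that records the patch on each $B_j$ containing $\argmax_{y\in B_j}u(y)$. The proof amounts to identifying this correspondence with the columns of $A^*$.

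For sufficiency, suppose $\pi=A^*\tau^*$ with $\tau^*\geq 0$. Summing the entries of $\pi_j$ over $i$ and using that each column of $A^*$ sums to $1$ on each $B_j$-block, the normalization $\sum_{i}\pi_{ji}=1$ forces $\sum_k \tau^*_k=1$, so $\tau^*$ is a probability vector over ${\cal A}^*$. For each rationalizable type $a^{(k)}\in{\cal A}^*$, since $a^{(k)}$ obeys SARP on the finite family $\{B_1,\dots,B_J\}$, Afriat's theorem (Afriat (1967), Varian (1982)) supplies an increasing utility function $u_k:\mathbb{R}^n_+\to\mathbb{R}$ that rationalizes it, i.e.\ whose maximizer on $B_j$ lies in the patch indicated by $a^{(k)}_{j}$. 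Defining the RUM $\Phi:=\sum_{k}\tau^*_k \delta_{u_k}$ and plugging into \eqref{def1} yields $P_j(S)=\sum_{k:\, a^{(k)}(B_j)\subset S}\tau^*_k=\pi_j\cdot\mathbf{1}_S$, exactly reproducing $\pi$.

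For necessity, given a RUM $\Phi$ that rationalizes $P$, associate to each $u$ in its support the behavioral type $a(u)$ obtained by recording the patch of the maximizer on each $B_j$. Because our standing assumption discards mass on intersections of budget lines, $a(u)$ is almost surely well-defined, and because the choices maximize a common utility $u$, no revealed preference cycle of the form (\ref{cycle_wrt}) can occur; hence $a(u)\in{\cal A}^*$ almost surely. Setting $\tau^*_k:=\Phi\bigl(\{u:a(u)=a^{(k)}\}\bigr)$ for each $a^{(k)}\in{\cal A}^*$ then yields a nonnegative vector with $\pi_{ji}=\int \mathbf{1}(a(u)_{ji}=1)\,d\Phi(u)=\sum_{k}\tau^*_k a^{(k)}_{ji}$, i.e.\ $\pi=A^*\tau^*$.

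The main obstacle is the verification that the decomposition in the necessity direction is both measurable and exhaustive: one must check that the map $u\mapsto a(u)$ is measurable (so that the numbers $\tau^*_k$ are well-defined under $\Phi$), and that any ambiguity on the measure-zero set of $u$ with multiple maximizers or with maximizers on budget intersections can be resolved without affecting $\pi$. Both are consequences of the continuity assumption on demand distributions invoked after the definition of patches; once they are in place, the construction above matches $\pi$ coordinate-by-coordinate with $A^*\tau^*$.
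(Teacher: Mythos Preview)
The paper does not prove Theorem~\ref{ks}; it is stated as the characterization of Kitamura and Stoye (2018) and serves only as background for the paper's own contribution (Theorem~\ref{main}). There is therefore no in-paper argument to compare your attempt against.

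Your sketch follows the standard route and is essentially sound, but two points deserve tightening. In the sufficiency direction, Afriat's theorem takes concrete price--quantity pairs as input, not patches: you must first select a representative bundle $x_j^{(k)}$ from each patch $a^{(k)}(B_j)$ before invoking it, and then ensure the rationalizing utility has a \emph{unique} maximizer on $B_j$ lying in that same patch, so that the indicator in \eqref{def1} is well-defined and equals $a^{(k)}_{ji}$. The classical Afriat construction only guarantees that the selected $x_j^{(k)}$ is \emph{a} maximizer; uniqueness requires a strictly quasiconcave rationalization, which is available but should be invoked explicitly. In the necessity direction, the continuity assumption on demand distributions ensures that intersections of budgets carry zero $P_j$-mass, but it does not by itself deliver measurability of $u\mapsto a(u)$; that depends on the measurable structure placed on the space of utilities, which is part of the setup in Kitamura and Stoye (and McFadden--Richter) rather than a consequence of continuity of $P$. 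With these details supplied, your argument is correct and matches the one underlying the cited result.
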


Thus, the above theorem characterizes the rationalizability through the existence of a nonnegative vector that solves the system of linear equation $\pi=A^*\tau^*$. In fact, as shown by KS, $\tau^*$ automatically satisfies the adding-up condition, and hence $\pi$ is represented as a weighted sum of rationalizable behavioral types.  Theorem \ref{ks} also implies that the set of rationalizable stochastic demand systems is captured as a polytope, since it is characterized as a convex hull of rationalizable behavioral types. In general, the representation of a polytope in terms of its vertices (as in Theorem \ref{ks}) is referred to as a ${\cal V}$-represetantion. 

On the other hand, it is well known that a polytope can be also represented as an intersection of finitely many half spaces of hyperplanes, which is referred to as an ${\cal H}$-representation. Once one of these representations is obtained, Minkowski-Weyl duality immediately implies the existence of the other representation. However, the existence here is purely theoretical and, in general, it is quite non-trivial to explicitly construct it. (See, for example, Ziegler (2007) for the detail.) Despite that, some specific structure of consumer problem allows us to establish an explicit ${\cal H}$-representation of Theorem \ref{ks}, which turns out to have several attractive economic implications. Construction of an alternative characterization is the goal of Section 3, and the duality with Theorem \ref{ks} is explored in Section 4.

\vspace{-0.1in}
\section{An alternative characterization}
\vspace{-0.1in}
\subsection{Characterization by hyperplanes}

For investigating the rationalizability of stochastic demand systems, by Theorem \ref{ks}, it suffices to look at its vector representation. Hence, in the rest of this paper, we always deal with a stochastic demand system by its vector representation $\pi$, and we simply say that $\pi$ is (not) rationalizable when the underlying $P$ is (not) rationalizable.

A key idea for constructing the alternative characterization for RUMs is capturing the structure of possible revealed preference relation across patches. In particular, the following notion plays crucial roles in our analysis. Let ${\mathscr J}=\{{\cal J}\subset \{1,2,...,J\}: |{\cal J}|\geq 2\}$. For each ${\cal J}\in {\mathscr J}$, we say that a patch $B_{ji}$ is \emph{undominated in} ${\cal J}$, if; \begin{align}\label{max}
j\in {\cal J},\mbox{ and } p_{j'}\cdot y-1>0\mbox{ for all }y\in B_{ji}\mbox{ and }j'\in {\cal J}\setminus \{j\}.
\end{align}
That is, for each subfamily of budgets ${\cal J}\in \mathscr{J}$, a patch $B_{ji}$ is undominated, if it is not dominated by any other patches turning up in ${\cal J}$ with respect to the direct revealed preference relation defined in (\ref{drp}).\footnote{Put otherwise, a patch is undominated, if it is \emph{maximal} in ${\cal J}$ with respect to the direct revealed preference relation. However, we use the notion of maximality/minimality with respect to set inclusion in other parts of the paper, so we simply use the term ``undominated" instead of ``maximal.".} (Otherwise, a patch is said to be \emph{dominated} in ${\cal J}$.) Thus, if $a(B_j)=B_{ji}$ and it is undominated in ${\cal J}$, then there is no budget $B_{j'}$ for which $a(B_{j'})\succ^R a(B_j)$ with $j'\in \cal J$. Note that, as a basic property of undominatedness of patches, if $j\in {\cal J}'\subset {\cal J}''$ and a patch $B_{ji}$ is undominated in ${\cal J}''$, then it is also undominated in ${\cal J}'$. 

Using the notion of undominated patches, we define the vector that can ``detect" the existence of cyclical choices in each subfamily of budgets. Let for each ${\cal J}\in \mathscr{J}$, define $\xi^{\cal J}=(\xi^{\cal J}_{11},...,\xi^{\cal J}_{1I_1};...;\xi^{\cal J}_{J1},...,\xi^{\cal J}_{JI_J})\in \{0,1\}^I$ such that \begin{align}\label{def_xi}
\xi^{\cal J}_{ji}={\bf 1}\left( \mbox{$B_{ji}$ is undominated in ${\cal J}$}\right).
\end{align}
Once $\xi^{\cal J}$ are obtained as above for all ${\cal J}\in \mathscr{J}$, let $\Xi$ be the $(|{\mathscr J}|\times I)$-matrix such that each row vector corresponds to each $\xi^{\cal J}$. All our results are derived from the nature of the vector $\xi^{\cal J}$ and matrix $\Xi$. Hence, before checking the properties of them, we raise a simple example to clarify how they are constructed (as well as the notion of undominated patches). 

\begin{figure}[tbp]
\centering
\includegraphics[keepaspectratio, scale=0.6]{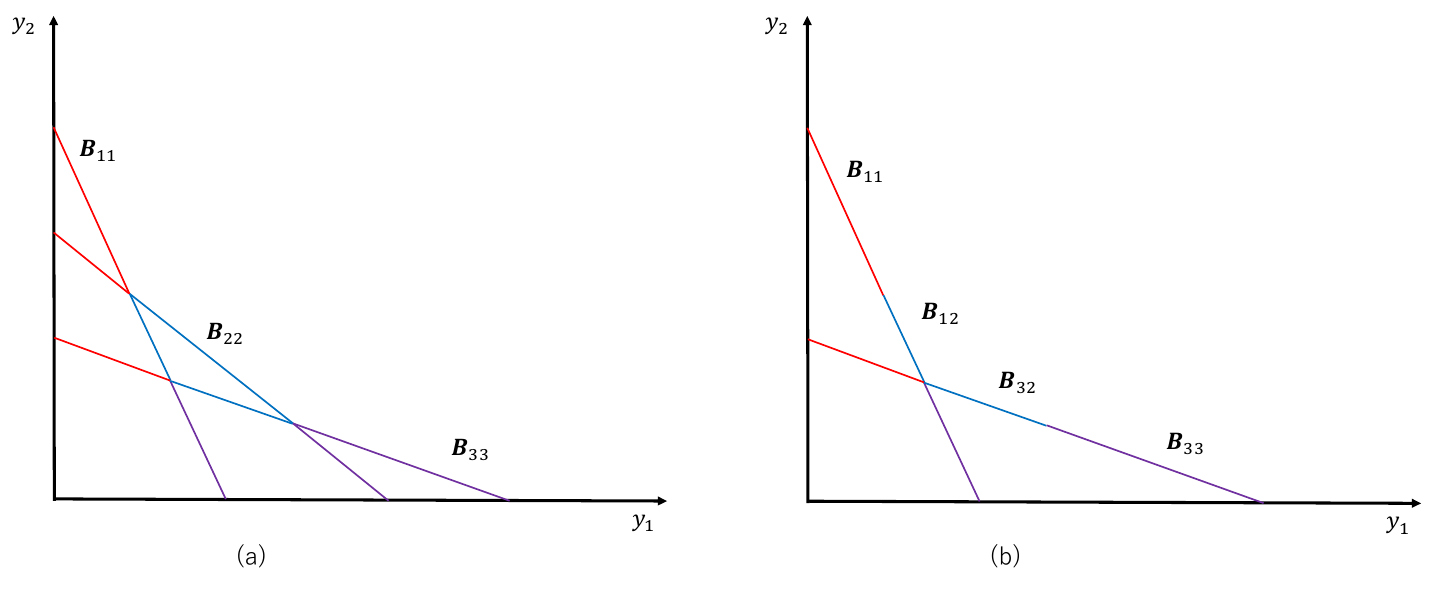}
\caption{Undominated patches in ${\cal J}=\{1,2,3\}$ and ${\cal J}=\{1,3\}$.}
\label{fig_maxpatch}
\end{figure}

\begin{example}\label{ex1}
Assume that $n=2$ and $J=3$, and consider the budget lines depicted in Figure \ref{fig_maxpatch}. In the figure, the red segments denote patches $B_{j1}$, the blue segments denote patches $B_{j2}$, and the purple segments denote patches $B_{j3}$ for $j=1,2,3$, respectively. Looking at the budget $B_1$, it is easy to confirm that the patch $B_{11}$ is undominated in $\{1,2,3\}$, which in turn implies that $B_{11}$ is undominated for the families of budgets $\{1,2\}$ and $\{1,3\}$. On the other hand, the patch $B_{12}$ is a dominated patch in $\{1,2,3\}$, since it is dominated by $B_{22}$ in terms of $\succ^R$. However, it \emph{is} undominated in $\{1,3\}$, since no patch in $B_3$ can dominate it. Repeating this argument, we obtain;
$\xi^{\{1, 2, 3\}}$ $=$ $(1, 0, 0; 0, 1, 0; 0, 0, 1)$,
$\xi^{\{1, 2\}}$ $=$ $(1, 0, 0; 0, 1, 1; 0, 0, 0)$,
$\xi^{\{2, 3\}}$ $=$ $(0, 0, 0; 1, 1, 0; 0, 0, 1)$, and
$\xi^{\{3, 1\}}$ $=$ $(1, 1, 0; 0, 0, 0; 0, 1, 1)$.
Accordingly, we obtain a $(|{\mathscr J}| \times I)$-matrix
\begin{align}\label{Xi_eg}
\Xi =
\begin{pmatrix} 
1 & 0 & 0 & 0 & 1 & 0 & 0 & 0 & 1\\
1 & 0 & 0 & 0 & 1 & 1 & 0 & 0 & 0\\
0 & 0 & 0 & 1 & 1 & 0 & 0 & 0 & 1\\
1 & 1 & 0 & 0 & 0 & 0 & 0 & 1 & 1
\end{pmatrix},
\end{align}
which will be repeatedly used in the examples in the rest of the paper.
\end{example} \vspace{0.05in}

To see the basic property concerning $\xi^{\cal J}$, notice that for $a\in {\cal A}$, $\xi^{\cal J}\cdot a=0$ implies that none of selected patch $a(B_j)$ is undominated in ${\cal J}$. This immediately implies that the direct revealed preference relation $\succ^R$ has to admit at least one cycle within ${\cal J}$, since there are only finitely many budgets. In this sense, each $\xi^{\cal J}$ can detect whether a behavioral type $a$ has revealed preference cycles within $\{B_j\}_{j\in {\cal J}}$, and hence, the matrix $\Xi$ provides an alternative representation of SARP. Given the importance of this claim in the paper, we summarize it as a lemma and provide a formal proof. See also Remark 1 below Lemma \ref{lem_sarp} for a more precise argument on the connection between $\xi^{\cal J}$ and revealed preference cycles.

\begin{lemma}\label{lem_sarp}
A behavioral type $a$ is rationalizable, if and only if it satisfies $\Xi a\geq \mathbbm{1}$, where $\mathbbm{1}$ is the $|{\mathscr J}|$-dimensional column vector consisting of $1$'s. 
\end{lemma}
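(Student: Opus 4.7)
The plan is to establish both directions directly from the definitions, exploiting a key observation about the patch construction: whether $a(B_{j'})\succ^R a(B_j)$ depends only on the patch $a(B_j)$ and on the budget $B_{j'}$, not on which particular patch is selected at $B_{j'}$. Indeed, by definition (\ref{drp}) and the patch property that $\sgn(p_{j'}\cdot y-1)$ is constant on each patch $B_{ji}$, we have $a(B_{j'})\succ^R a(B_j)$ if and only if $p_{j'}\cdot y<1$ for all $y\in a(B_j)$. Consequently, a chosen patch $a(B_j)=B_{ji}$ is undominated in ${\cal J}$ (with $j\in{\cal J}$) in the sense of (\ref{max}) if and only if no $j'\in{\cal J}\setminus\{j\}$ satisfies $a(B_{j'})\succ^R a(B_j)$, regardless of which patches are chosen at the other budgets in ${\cal J}$.

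For the ``only if'' direction, assume $a$ is rationalizable, i.e.\ SARP-consistent. Fix any ${\cal J}\in{\mathscr J}$ and consider the directed graph on ${\cal J}$ whose edges are $j'\to j$ whenever $a(B_{j'})\succ^R a(B_j)$. Since $a$ has no revealed preference cycles, this graph is acyclic, hence the finite set ${\cal J}$ must contain a source $j^*$ with no incoming edges. By the observation above, $a(B_{j^*})$ is then undominated in ${\cal J}$, so $\xi^{\cal J}_{j^*i^*}=1$ where $a(B_{j^*})=B_{j^*i^*}$, and therefore $\xi^{\cal J}\cdot a\geq 1$. Since this holds for every ${\cal J}\in{\mathscr J}$, we conclude $\Xi a\geq\mathbbm{1}$.

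For the ``if'' direction I argue by contrapositive. Suppose $a$ has a revealed preference cycle, say $a(B_{j_1})\succ^R a(B_{j_2})\succ^R\cdots\succ^R a(B_{j_l})\succ^R a(B_{j_1})$, and let ${\cal J}=\{j_1,\ldots,j_l\}\in{\mathscr J}$. For each $k$, the chosen patch $a(B_{j_k})$ is dominated in ${\cal J}$ by $a(B_{j_{k-1}})$ (indices mod $l$), so it fails to be undominated in ${\cal J}$; equivalently, the coordinate of $\xi^{\cal J}$ corresponding to $a(B_{j_k})$ equals $0$. Since $a$ selects exactly one patch per budget and $\xi^{\cal J}$ is supported on coordinates $(j,i)$ with $j\in{\cal J}$, we obtain $\xi^{\cal J}\cdot a=0$, which violates $\Xi a\geq\mathbbm{1}$.

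The only nontrivial step is recognizing that ``undominated in ${\cal J}$'' in the sense of (\ref{max})---a condition about the entire patch lying strictly above each $B_{j'}$ for $j'\in{\cal J}\setminus\{j\}$---coincides with the graph-theoretic notion of being a source for the restriction of $\succ^R$ to the chosen patches on ${\cal J}$. This equivalence hinges on the fact that the revealed preference relation under the patch construction is determined by the dominated patch alone, which is the substantive content of Kitamura and Stoye's patch device and the assumption excluding intersections of budget lines.
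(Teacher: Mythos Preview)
Your proof is correct and follows essentially the same approach as the paper's: both establish the forward direction by noting that acyclicity of $\succ^R$ on a finite family $\{B_j\}_{j\in{\cal J}}$ forces at least one chosen patch to be undominated, and the reverse direction by exhibiting the cycle-index set ${\cal J}$ on which $\xi^{\cal J}\cdot a=0$. Your write-up is more explicit than the paper's terse version---you spell out the graph-theoretic ``source'' argument and the key observation that $a(B_{j'})\succ^R a(B_j)$ depends only on the patch $a(B_j)$ and the budget $B_{j'}$---but the underlying logic is identical.
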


\begin{proof}
Suppose that $a\in {\cal A}$ is not rationalizable. Then, the violation of SARP implies the existence of some ${\cal J}=\{j_1,j_2,...,j_l\}\in \mathscr{J}$ on which the choices made by $a$ forms an $\succ^R$-cycle such as\begin{align}\label{cycle1}
a(B_{j_1})\succ^R a(B_{j_2})\succ^R\cdots \succ^R a(B_{j_l})\succ^R a(B_{j_1}).
\end{align}
Obviously, none of patches turning up in the above cycle is undominated in ${\cal J}$, and hence $\xi^{\cal J}\cdot a=0$, which in turn implies that $\Xi a\geq \mathbbm{1}$ does not hold. Conversely, if a behavioral type $a\in {\cal A}$ does not have any $\succ^R$-cycle as in (\ref{cycle1}) for any ${\cal J}\in \mathscr{J}$, then it must choose at least one undominated patch in ${\cal J}$; that is, $\xi^{\cal J}\cdot a\geq 1$ must hold for every ${\cal J}\in \mathscr{J}$. This immediately shows that if $a$ is rationalizable, then $\Xi a\geq \mathbbm{1}$.
\end{proof}

\noindent
{\sc Remark 1.} Concerning the ``test" for the existence of cycle using $\xi^{\cal J}$, it should be noted that $\xi^{\cal J}\cdot a\geq 1$ does \emph{not} necessarily mean that $\succ^R$ is acyclic on $\{B_j\}_{j\in {\cal J}}$. For instance, in the situation of Example \ref{ex1}, if $a=(0,1,0;1,0,0;0,0,1)$, then $\xi^{\{1,2,3\}}\cdot a=1$. However, as $\xi^{\{1,2\}}\cdot a=0$ suggests, it contains a cycle $a(B_1)\succ^R a(B_2)\succ^R a(B_1)$. On the other hand, if $\xi^{{\cal J}'}\cdot a\geq 1$ for any ${\cal J}'\subset {\cal J}$, then it implies that $\succ^R$ is acyclic on $\{B_j\}_{j\in {\cal J}}$. Relatedly, if $\xi^{\cal J}\cdot a=0$ and there is no ${\cal J}'\subset {\cal J}$ for which $\xi^{{\cal J}'}\cdot a=0$, then it implies the existence of a revealed preference cycle involving all elements of $\{a(B_j)\}_{j\in {\cal J}}$; that is, letting ${\cal J}=\{j_1,j_2,...,j_l\}$, there is a cycle such as $a(B_{j_1})\succ^R a(B_{j_2})\succ^R\dots \succ^R a(B_{j_l})\succ^R a(B_{j_1}),$ possibly by adjusting the indices.\vspace{0.1in}   

While Lemma \ref{lem_sarp} ensures that the rationalizability of behavioral types can be tested by the system of inequalities (or equivalently, by the set of hyperplanes) generated by $\Xi$ and $\mathbbm{1}$, perhaps strikingly, it extends to the rationalizability of a stochastic demand system. 

\begin{theorem}\label{main}
A stochastic demand system $\pi$ is rationalizable, if and only if it satisfies $\Xi \pi \geq \mathbbm{1}$. \end{theorem}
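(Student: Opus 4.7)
The plan is to prove both directions by combining Theorem \ref{ks} with the structural content of Lemma \ref{lem_sarp}.

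For the \emph{necessity} direction, suppose $\pi$ is rationalizable, so by Theorem \ref{ks}, $\pi=A^*\tau^*$ for some $\tau^*\geq 0$. Since each column of $A^*$ has entries summing to $1$ within every budget block, and $\pi$ has the same property, summing the coordinates of $\pi=A^*\tau^*$ corresponding to any single budget $j$ forces $\mathbbm{1}^\top \tau^*=1$. By Lemma \ref{lem_sarp}, every column $a$ of $A^*$ satisfies $\Xi a\geq \mathbbm{1}$, so every entry of the matrix $\Xi A^*$ is at least $1$. Therefore $\Xi\pi=(\Xi A^*)\tau^*\geq \mathbbm{1}\,(\mathbbm{1}^\top\tau^*)=\mathbbm{1}$, as required.

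For the \emph{sufficiency} direction, introduce the polytope $P$ consisting of all $\pi\in\mathbb{R}^I_+$ that satisfy the adding-up constraints and the inequality $\Xi\pi\geq\mathbbm{1}$. By Lemma \ref{lem_sarp}, the $\{0,1\}$-valued points of $P$ are precisely the elements of ${\cal A}^*$, and the necessity direction just established shows that $\mbox{conv}({\cal A}^*)\subseteq P$. By Theorem \ref{ks}, it therefore suffices to prove the reverse inclusion $P\subseteq \mbox{conv}({\cal A}^*)$; equivalently, $P$ must be \emph{integral}, meaning every vertex of $P$ lies in $\{0,1\}^I$.

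This integrality claim is the crux, and the plan is to address it through the notion of Chv\'{a}tal rank. The goal is to show that the Chv\'{a}tal rank of $P$, viewed as an LP relaxation of the integer program whose feasible set is ${\cal A}^*$, equals zero, which is equivalent to $P$ coinciding with its integer hull. The argument should exploit the particular combinatorial structure of $\Xi$ by leveraging the nested monotonicity of undominated patches noted after (\ref{def_xi}): whenever $j\in {\cal J}'\subseteq {\cal J}''$ and $B_{ji}$ is undominated in ${\cal J}''$, it is also undominated in ${\cal J}'$. This monotonicity ties together the constraints associated with different subfamilies ${\cal J}\in\mathscr{J}$ into a coherent nested system that should prevent fractional vertices.

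The main obstacle is precisely this integrality step. Because $\Xi$ is not expected to be totally unimodular in general, standard TU-based arguments do not apply; instead, the proof plausibly proceeds by contradiction. Assuming a vertex $\pi^\star\in P$ with a fractional coordinate, the plan is to produce two distinct points of $P$ whose midpoint is $\pi^\star$, contradicting vertex-hood. The construction should exploit the combinatorial structure of undominated patches across nested subfamilies, likely via an induction on $|{\cal J}|$ (or on the size of the fractional support of $\pi^\star$) that rounds fractional mass to $\{0,1\}$ while maintaining both adding-up and $\Xi\pi\geq \mathbbm{1}$. Equivalently, one may argue dually by showing that any linear functional is minimized over $P$ at an integer point, thereby confirming that no Chv\'{a}tal--Gomory cut can strictly tighten the LP description.
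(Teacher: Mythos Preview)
Your necessity argument is fine, and you correctly identify that sufficiency reduces to showing the polytope $P$ is integral, with Chv\'{a}tal rank zero as the target. But the proposal stops precisely where the real work begins: you name the obstacle and then offer only vague strategies (a hypothetical midpoint construction, an unspecified induction, a dual claim that ``no Chv\'{a}tal--Gomory cut tightens'') without executing any of them. None of these sketches is a proof, and the midpoint/rounding idea in particular gives no indication of how one would perturb a fractional vertex while simultaneously respecting the adding-up constraints and all inequalities $\xi^{\cal J}\cdot\pi\geq 1$. The monotonicity of undominated patches you cite is relevant, but on its own it does not yield integrality.

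The paper's actual argument is concrete and different from anything you sketch. To show Chv\'{a}tal rank zero, it suffices to verify that whenever $u\in[0,1]^{|\mathscr{J}|}$ makes $u\Xi$ integral, the scalar $u\mathbbm{1}=\sum_{{\cal J}\in\mathscr{J}}u_{\cal J}$ is already an integer, so the ceiling in the Chv\'{a}tal closure does nothing. The key idea is a geometric construction you do not mention: choose a direction $d$ with $p_1\cdot d<p_2\cdot d<\cdots<p_J\cdot d$ (after relabeling budgets) and let $B_{k1}$ be the patch on $B_k$ containing the bundle $d/(p_k\cdot d)$. Then $B_{k1}$ is undominated in ${\cal J}$ if and only if ${\cal J}\in\mathscr{J}_k:=\{{\cal J}\in\mathscr{J}:k\in{\cal J},\ j\notin{\cal J}\text{ for all }j<k\}$. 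Consequently the $(k,1)$-coordinate of $u\Xi$ equals $\sum_{{\cal J}\in\mathscr{J}_k}u_{\cal J}$, which must be an integer; since $\{\mathscr{J}_k\}_{k=1}^{J-1}$ is a partition of $\mathscr{J}$, the full sum $\sum_{{\cal J}}u_{\cal J}$ is a finite sum of integers. This special profile of patches, produced from the price vectors themselves, is the missing ingredient in your proposal.
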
\vspace{0.1in}

This is an alternative representation of the ``test" for rationalizability of a given stochastic demand system, and hence, it is logically equivalent to Theorem \ref{ks}. On the other hand, the condition in Theorem \ref{main} is a quantifier-free characterization of random utility models. In addition, it characterizes the set of rationalizable demand systems as the intersection of half spaces of hyperplanes $\{\pi: \xi^{\cal J}\cdot \pi=1\}$ for ${\cal J}\in \mathscr{J}$. Thus, our characterization is an ${\cal H}$-representation of the polytope of those demand systems, of which the duality between the ${\cal V}$-representation in Theorem \ref{ks} is shown in the next section. Further mathematical argument concerning this theorem is postponed to Section 3.3, since it works as a good introduction to the formal proof stated there.

The condition $\Xi\pi\geq \mathbbm{1}$ itself requires that the sum of choice frequencies across undominated patches should not be smaller than $1$ for every subfamily of budgets. Intuitively, this implies that the total weight on cyclical choices is not too large, and hence, the choices in a population is explained by a distribution of rational choices. This intuition is reminiscent of a necessary and sufficient condition for the consistency with \emph{weak axiom of revealed preference (WARP)}, established by Hoderlein and Stoye (2014).\footnote{Note that WARP requires the asymmetry of the direct revealed preference relation, or the lack of choice reversal between any pair of budgets. In addition, to be precise, Hoderlein and Stoye (2014) derived the upper bound and the lower bound of WARP-consistent behavior in a population, as well as the statistical procedure for estimating them from real data.} 
Their condition essentially requires that for every pair of budgets, the sum of choice frequencies across WARP-violating combination of patches should not exceed $1$, which is clearly equivalent to requiring  $\xi^{\cal J}\cdot \pi\geq {1}$ for every ${\cal J}$ consisting of two budgets. For example, using the budget lines in Example \ref{ex1}, their condition requires that $(\pi_{12}+\pi_{13})+\pi_{21}\le 1$, $\pi_{23}+(\pi_{31}+\pi_{32})\le 1$, and $\pi_{13}+\pi_{31}\le 1$, which is equivalent to $\xi^{\{1,2\}}\cdot \pi\geq 1$, $\xi^{\{2,3\}}\cdot \pi\geq 1$ and $\xi^{\{1,3\}}\cdot \pi\geq 1$. Thus, our result can be interpreted as an extension of Hoderlein-Stoye approach to the case of RUMs, or stochastic choices consistent with SARP.  Relatedly, since WARP and SARP are equivalent in the two-commodity model, for checking the rationalizability in such a case, it suffices to consider $\xi^{\cal J}$ for ${\cal J}$ with $|{\cal J}|=2$ (\emph{i.e.} Hoderlein and Stoye's condition). However, as we will see in the next section, the value of $\xi^{\cal J}\cdot \pi$ for ${\cal J}\in \mathscr{J}$ with $|{\cal J}|\geq 3$ \emph{can} have some substantial information even in the two-commodity setting.

As a benefit of having the characterization in Theorem \ref{main}, when $\pi$ is not rationalizable, it allows us to obtain some information about ``where" the rationality breaks down. If $\pi$ is not rationalizable, then there exists some ${\cal J}\in \mathscr{J}$ such that $\xi^{\cal J}\cdot \pi<1$. Hence, in order to represent $\pi$ as a convex combination of behavioral types, it is inevitable to put positive weights on some behavioral types obeying $\xi^{\cal J}\cdot a=0$. By Lemma \ref{lem_sarp}, this in turn implies that revealed preference cycles within the budget family $\{B_j\}_{j\in {\cal J}}$ is crucial to explain $\pi$. In particular, given the fact stated in Remark 1, if ${\cal J}=\{j_1,j_2,...,j_l\}$ is a minimal element of $\mathscr{J}$ obeying $\xi^{{\cal J}}\cdot \pi<1$, then we have to put positive weights on some behavioral types containing a revealed preference cycle involving all elements of ${\cal J}$ (the one like (\ref{cycle1})). We summarize this as a proposition for future references.

\begin{proposition}\label{violation}
Suppose that $\Xi\pi\not\geq \mathbbm{1}$ holds for a given stochastic demand system $\pi$. Then, for every ${\cal J}\in \mathscr{J}$ with $\xi^{\cal J}\cdot\pi<1$ and $\tau\in \Delta(\cal A)$, \begin{align}
\pi=\sum_{a\in {\cal A}}\tau_aa\Longrightarrow \sum_{a: \xi^{\cal J}\cdot a<1}\tau_a>0.
\end{align}
In particular, if ${\cal J}$ is a minimal element of the set $\{{\cal J}'\in\mathscr{J}: \xi^{{\cal J}'}\cdot\pi<1\}$ (with respect to the set inclusion), then some revealed preference cycle consisting of indices in ${\cal J}$ in the sense of (\ref{cycle1}) must occur with a positive probability.
\end{proposition}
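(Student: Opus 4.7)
The plan is to handle the two parts of the claim in sequence. For the main assertion, the key observation is that each coordinate of $\xi^{\cal J}$ is $0$ or $1$, and for a behavioral type $a$ the inner product $\xi^{\cal J}\cdot a$ counts the number of budgets $j \in {\cal J}$ at which $a$ selects an undominated patch; in particular $\xi^{\cal J}\cdot a$ is a nonnegative integer, so either $\xi^{\cal J}\cdot a = 0$ or $\xi^{\cal J}\cdot a \geq 1$. Starting from any representation $\pi = \sum_a \tau_a a$ with $\tau \in \Delta({\cal A})$, I would expand
\begin{align*}
\xi^{\cal J}\cdot \pi \;=\; \sum_{a} \tau_a\,(\xi^{\cal J}\cdot a) \;\geq\; \sum_{a : \xi^{\cal J}\cdot a \geq 1} \tau_a \;=\; 1 - \sum_{a : \xi^{\cal J}\cdot a < 1}\tau_a,
\end{align*}
using $\xi^{\cal J}\cdot a \geq 1$ on the retained terms and $\sum_a \tau_a = 1$ for the final equality. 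The hypothesis $\xi^{\cal J}\cdot \pi < 1$ then forces $\sum_{a : \xi^{\cal J}\cdot a < 1}\tau_a > 0$, establishing the main assertion.

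For the ``in particular'' part, I apply the first step to the minimal ${\cal J}$ to extract some $a^*$ in the support of $\tau$ with $\xi^{\cal J}\cdot a^* = 0$. By the argument in the proof of Lemma \ref{lem_sarp}, this forces $a^*$'s choices on $\{B_j\}_{j\in{\cal J}}$ to exhibit a $\succ^R$-cycle, so a cycle of the form in (\ref{cycle_wrt}) with indices drawn from ${\cal J}$ occurs with probability at least $\tau_{a^*} > 0$. To upgrade this to a cycle ``involving all elements of ${\cal J}$'', I would invoke Remark 1: such a cycle uses every $j\in{\cal J}$ exactly when no proper ${\cal J}'\subsetneq {\cal J}$ with $|{\cal J}'|\geq 2$ also satisfies $\xi^{{\cal J}'}\cdot a^* = 0$. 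The aim is therefore to select $a^*$ in the support so that $\xi^{{\cal J}'}\cdot a^* \geq 1$ for every such proper ${\cal J}'$.

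Establishing the existence of this ``clean'' $a^*$ is the main obstacle, and it is where the minimality of ${\cal J}$ within $\{{\cal J}' \in \mathscr{J} : \xi^{{\cal J}'}\cdot \pi < 1\}$ should enter. Suppose for contradiction that every $a$ in the support with $\tau_a > 0$ and $\xi^{\cal J}\cdot a = 0$ admits a (minimal) proper ${\cal J}'_a \subsetneq {\cal J}$ with $\xi^{{\cal J}'_a}\cdot a = 0$. One would then hope to aggregate these ``smaller cycle'' witnesses into a single proper ${\cal J}'\subsetneq {\cal J}$ for which $\xi^{{\cal J}'}\cdot \pi < 1$, contradicting minimality. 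Because distinct types may be paired with distinct witnesses ${\cal J}'_a$, this contradiction is not immediate; I expect the crux to be a pigeonhole/accounting argument over the assignment $a\mapsto {\cal J}'_a$, combined with bookkeeping of the contributions from types with $\xi^{\cal J}\cdot a \geq 1$ to the relevant coordinates of $\Xi\pi$.
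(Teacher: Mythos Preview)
Your treatment of the first assertion is correct and is exactly the averaging argument the paper gives: the proposition is presented there as a summary of the discussion immediately preceding it, with no separate formal proof, and your expansion of $\xi^{\cal J}\cdot\pi$ simply makes that discussion precise. For the ``in particular'' clause your approach again coincides with the paper's---extract some $a^*$ with $\xi^{\cal J}\cdot a^*=0$ via the first part, then appeal to Remark~1 to obtain a cycle using every index of ${\cal J}$.

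You are right, however, that Remark~1 applies only when $\xi^{{\cal J}'}\cdot a^*\ge 1$ for every proper ${\cal J}'\subsetneq{\cal J}$, and that minimality of ${\cal J}$ is a condition on $\pi$, not on any particular $a^*$ in the support. The paper's discussion does not fill this gap either: it simply invokes Remark~1 under the minimality hypothesis without verifying that a ``clean'' $a^*$ exists in every representation $\tau$. So your proposal is already as complete as the paper's own argument. The pigeonhole/accounting step you sketch is not carried out in the paper, and the obstacle you isolate---distinct bad types $a$ may witness distinct proper subfamilies ${\cal J}'_a$, so one cannot immediately produce a single proper ${\cal J}'$ with $\xi^{{\cal J}'}\cdot\pi<1$ to contradict minimality---is genuine and remains unaddressed there as well.
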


\vspace{-0.2in}
\subsection{Numerical examples}

Below, we raise three examples which respectively correspond to (i) a two-commodity and three-budget example where the stochastic demand system is rationalizable, (ii) a two-commodity and three-budget example where the stochastic demand is not rationalizable, and (iii) a three-commodity and three-budget example where the stochastic demand system is not rationalizable.  As explained above, the first two cases can also be dealt with by Hoderlein and Stoye's condition, but it would help how our characterization works in a simple setting. The third example satisfies the condition by Hoderlein and Stoye, but not the condition in Theorem \ref{main}. 

\begin{example} \label{ex2}
Consider the budgets depicted in Figure \ref{fig_maxpatch}, in which, as shown in Example \ref{ex1},  the matrix $\Xi$ is calculated as in (\ref{Xi_eg}). If a stochastic demand system is specified as $$\pi=\left(\frac{1}{3},\frac{1}{3},\frac{1}{3};\frac{1}{3},\frac{1}{3},\frac{1}{3};\frac{1}{3},\frac{1}{3},\frac{1}{3}\right)^t,$$ then it holds that $$\Xi \pi = \left(1,1,1,\frac{4}{3}\right)^t\geq \mathbbm{1}.$$
Hence $\pi$ is rationalizable by Theorem 1. For example, $\pi$ can be represented by the distribution $\tau$ 
that assigns probability of $1/3$ to each of behavioral types
$a^1=(1, 0, 0; 1, 0, 0; 1, 0, 0)^t$, $a^2=(0, 1, 0; 0, 1, 0; 0, 1, 0)^t$, and
$a^3=(0, 0, 1; 0, 0, 1; 0, 0, 1)^t$. Using Lemma \ref{lem_sarp}, it is straightforward to see that they are all rationalizable behavioral types, and hence, Theorem \ref{ks} also ensures the rationalizability of this stochastic demand system.
\end{example}

\begin{example}\label{ex3}
Again, consider the same budgets as in the preceding example, and let $$\pi=\left(\frac{1}{10},\frac{8}{10},\frac{1}{10};\frac{4}{10},\frac{2}{10},\frac{4}{10};\frac{1}{10},\frac{8}{10},\frac{1}{10}\right)^t.$$ In this case, $$\Xi\pi=\left(\frac{2}{5},\frac{7}{10},\frac{7}{10},\frac{9}{5}\right)^t\not\geq \mathbbm{1},$$ and hence, the stochastic demand system is not rationalizable. In addition, by Proposition \ref{violation}, if $\pi=\sum_{a\in {\cal A}}\tau_aa$ is satisfied for some $\tau\in \Delta({\cal A})$, then it must hold that $\sum_{a:\xi^{\{1,2,3\}}\cdot a=0}\tau_a>0$,  $\sum_{a:\xi^{\{1,2\}}\cdot a=0}\tau_a>0$, and  $\sum_{a:\xi^{\{2,3\}}\cdot a=0}\tau_a>0$ (recall the construction of $\Xi$ in Example \ref{ex1}). In particular, since $\{1,2\}$ and $\{2,3\}$ are minimal within budget families obeying $\xi^{\cal J}\cdot \pi<1$, cyclical choices must occur with positive probabilities between budgets $1$ and $2$ and between budgets $2$ and $3$.
For example, $\pi$ is represented as a convex combination of behavioral types by letting $a^1=(1,0,0;1,0,0;1,0,0)^t$, $a^2=(0,1,0;0,1,0;0,1,0)^t$, $a^3=(0,0,1;0,0,1;0,0,1)^t$, $a^4=(0,1,0;1,0,0;0,1,0)^t$ and $a^5=(0,1,0;0,0,1;0,1,0)^t$, and putting weights as $\tau_{a^1}=1/10$, $\tau_{a^2}=2/10$, $\tau_{a^3}=1/10$, $\tau_{a^4}=3/10$ and $\tau_{a^5}=3/10$. Using Lemma \ref{lem_sarp}, it is straightforward that $a^1$, $a^2$ and $a^3$ are rationalizable, while $a^4$ and $a^5$ are not. To be more specific, $a^4$ contains cyclical choices between budgets $1$ and $2$, while $a^5$ has cyclical choices between budgets $2$ and $3$, both of which can be also checked through the product $\Xi a^m$ ($m=4,5$).
\end{example}

\begin{example}\label{ex4}
Consider an example taken from KS (Example 3.2), where $n=3$ and $J=3$. The budget lines are respectively determined by price vectors $p_1=(1/2,1/4,1/4)$, $p_2=(1/4, 1/2, 1/4)$, and $p_3=(1/4,1/4,1/2)$. Each budget line in this example has four patches, and there are $I=12$ patches in total. Figure \ref{3-3eg} describes the situation. There, the patches are indexed so that patches $B_{j4}$ for $j=1,2,3$ are undominated in $\{1, 2, 3\}$; $B_{14}$, $B_{13}$, $B_{24}$, and $B_{22}$ are undominated in $\{1, 2\}$; $B_{24}$, $B_{23}$, $B_{34}$, and $B_{32}$ are undominated in $\{2, 3\}$; $B_{34}$, $B_{33}$, $B_{14}$, and $B_{12}$ are undominated in $\{1,3\}$. Accordingly, we have a $(4 \times 12)$-matrix
\begin{align*}
\Xi =
\begin{pmatrix} 
0 & 0 & 0 & 1 & 0 & 0 & 0 & 1 & 0 & 0 & 0 & 1\\
0 & 0 & 1 & 1 & 0 & 1 & 0 & 1 & 0 & 0 & 0 & 0\\ 
0 & 0 & 0 & 0 & 0 & 0 & 1 & 1 & 0 & 1 & 0 & 1\\ 
0 & 1 & 0 & 1 & 0 & 0 & 0 & 0 & 0 & 0 & 1 & 1
\end{pmatrix},
\end{align*}
where the row vectors are $\xi^{\{1,2,3\}}$, $\xi^{\{1,2\}}$, $\xi^{\{2,3\}}$, and $\xi^{\{1,3\}}$ in the order from top to bottom. Now, consider the stochastic demand system $$\pi=\left(0,\frac{1}{2},\frac{1}{2},0;0,\frac{1}{2},\frac{1}{2},0;0,\frac{1}{2},\frac{1}{2},0\right)^t.$$ Then, it holds that
$$\Xi\pi=\left(0, 1, 1, 1\right)^t\not\geq \mathbbm{1},$$
and hence, by Theorem \ref{main}, this stochastic demand system $\pi$ is not rationalizable. On the other hand, by Proposition \ref{violation}, cyclical choices is inevitable only on the set of budgets $\{1,2,3\}$, and the condition for WARP-consistency by Hoderlein and Stoye (2014) is satisfied. Thus, this is an example of the stochastic demand system consistent with WARP, but not rationalizable. Indeed, this $\pi$ is obtained as the midpoint of two behavioral types $a^1=(0, 1, 0, 0; 0, 1, 0, 0; 0, 1, 0, 0)^t$ and $a^2=(0, 0, 1, 0; 0, 0, 1, 0; 0, 0, 1, 0)^t$, both of which obey WARP, but not SARP.
\end{example}

\begin{figure}[tbp]
\centering
\includegraphics[keepaspectratio, scale=0.55]{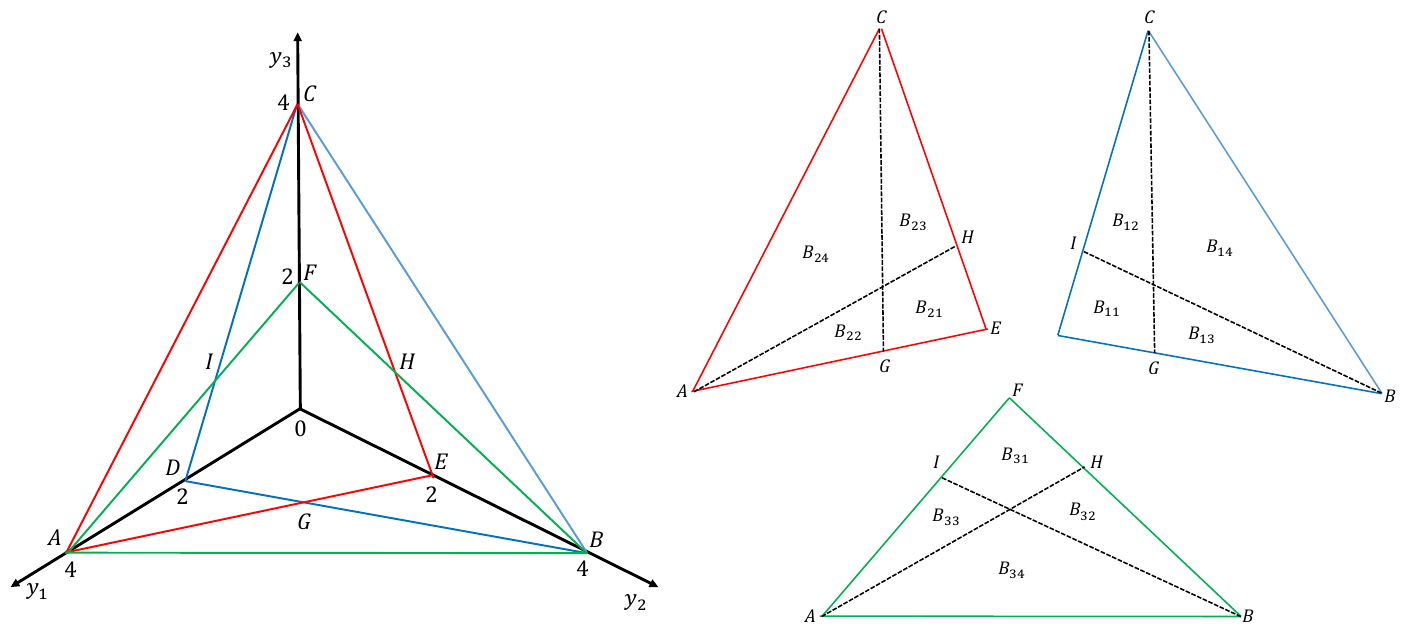}
\caption{Budget lines and patches in Example \ref{ex4}.}
\label{3-3eg}
\end{figure}

\vspace{-0.2in}
\subsection{Integrality of polytopes and the proof of Theorem \ref{main}}

From a mathematical viewpoint, as already referred to, Theorem \ref{main} says that the set of rationalizable stochastic demand systems is characterized as the intersection of finitely many half spaces in the form of $\{\pi:\xi^{\cal J}\cdot \pi\geq 1\}$. Put otherwise, the set of rationalizable demand systems is represented as\begin{align}
{\cal P}:=\bigcap_{{\cal J}\in \mathscr{J}}\{\pi:\xi^{\cal J}\cdot \pi\geq 1\}.
\end{align}
On the other hand, Lemma \ref{lem_sarp} says that\begin{align}
{\cal A}^*=\bigcap_{{\cal J}\in \mathscr{J}}\{a:\xi^{\cal J}\cdot a\geq 1\},
\end{align}
and hence, the set of rationalizable behavioral types is captured as the set of integral points of ${\cal P}$. Thus, the essential claim in Theorem \ref{main} is that ${\cal P}$ is an \emph{integral polytope} of which the set of vertices is equal to ${\cal A}^*$. (Strictly speaking, the hyperplanes corresponding to the nonnegativity and adding-up conditions should be also included, but they do not affect the following argument and are omitted.) Note that a polytope is said to be integral, if it is equal to the convex hull of its integer points. For example, the polytope in Figure \ref{inpol}(a) is not an integral polytope, while that in Figure \ref{inpol}(b) is an integral polytope. Note that, in Figure \ref{inpol}, lines represent hyperplanes, while dot points are regarded as integer points. In these polytopes, the sets of integral points are the same with each other. Thus, even if two sets of hyperplanes specify the same set of integer points, they may generate different polytopes. Rephrasing it in terms of the consumer theory, even if we obtain some matrix representation of rationalizable behavioral types, it  might not generate the set of rationalizable stochastic demand systems, which makes the claim of Theorem \ref{main} nontrivial.

\begin{figure}[tbp]
\centering
\includegraphics[keepaspectratio, scale=0.6]{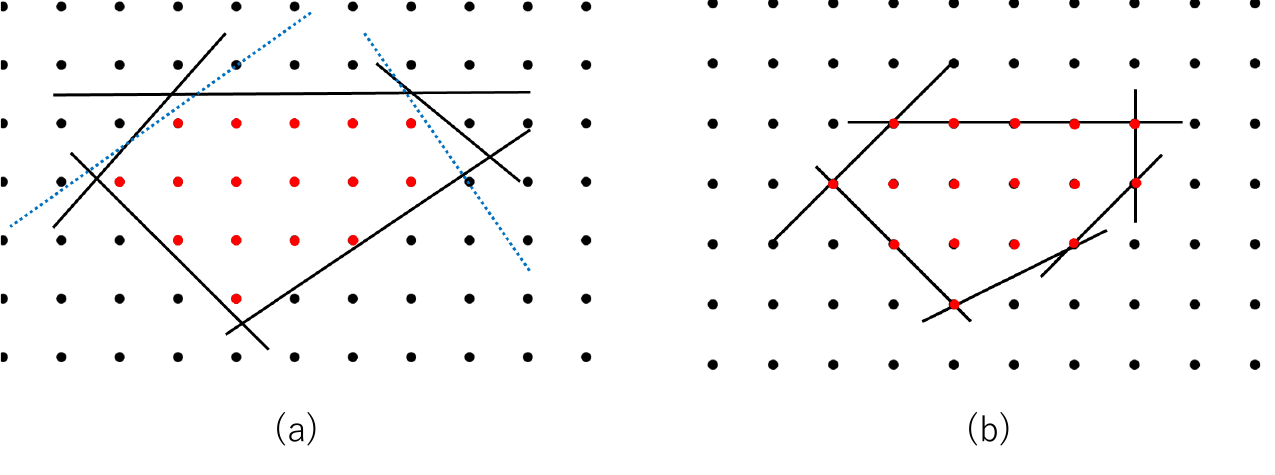}
\caption{Non-integral/Integral polytopes}
\label{inpol}
\end{figure}

To show that the hyperplanes generated by matrix $\Xi$ create a situation described as Figure \ref{inpol}(b) rather than that described as Figure \ref{inpol}(a), we use the notion of \emph{Chv\'{a}tal rank} explained below. In general, letting ${\cal Q}=\{q\in \mathbb{R}^L:  Wq\geq \theta\}$ and ${\cal Q}_{\cal I}=\mbox{conv.}\left({\cal Q}\cap \mathbb{Z}^L\right)$, ${\cal Q}$ is integral if ${\cal Q}={\cal Q}_{\cal I}$. The set ${\cal Q}_{\cal I}$ is referred to as the \emph{integral hull} of ${\cal Q}$. While the equality is not necessarily the case, ${\cal Q}_{\cal I}\subset {\cal Q}$ always holds, and, when $W$ is an $(M\times L)$-rational matrix, it is known that, the set $${\cal Q}^{(1)}:=\{q\in {\cal Q}: zW\mbox{ is integral for some }z\in [0,1]^M\Longrightarrow (zW)q\geq \lceil z\theta\rceil\}$$ is in between ${\cal Q}_{\cal I}$ and ${\cal Q}$. (Note that $\lceil\cdot \rceil$ stands for the ceiling function.)  That is, it holds that ${\cal Q}_{\cal I}\subset {\cal Q}^{(1)}\subset {\cal Q}$. This ${\cal Q}^{(1)}$ is referred to as \emph{Chv\'{a}tal closure} of ${\cal Q}$. Intuitively, Chv\'{a}tal closure adds some constraints to the original polytope ${\cal Q}$ to remove some non-integer vertices, such as dotted lines in Figure \ref{inpol}(a). Thus, ${\cal Q}^{(1)}$ is also a polytope, and defining ${\cal Q}^{(2)}$ as Chv\'{a}tal closure of ${\cal Q}^{(1)}$, it holds that ${\cal Q}_{\cal I}\subset {\cal Q}^{(2)}\subset {\cal Q}^{(1)}\subset {\cal Q}$. It is known that, repeating this procedure, there exists some $r<\infty$ such that ${\cal Q}^{(r)}={\cal Q}_{\cal I}$, with ${\cal Q}^{(0)}:={\cal Q}$. The minimum integer $r$ for which ${\cal Q}^{(r)}={\cal Q}_{\cal I}$ is called \emph{Chv\'{a}tal rank}, and hence, ${\cal Q}$ is integral if and only if its Chv\'{a}tal rank is equal to $0$. See, for example, Schrijver (1980) and Conforti, Cornu\'{e}jols, Zambelli (2014) for the detailed argument.\footnote{This means that for any polytope ${\cal Q}$, there is a finitely-many-step procedure to obtain its integral hull ${\cal Q}_{\cal I}$. In addition, since each step just adds finitely many linear inequalities, eventually one can obtain the matrix representation of ${\cal Q}_{\cal I}$. However, in general, this ``existence" is purely theoretical level, and it is typically hard to explicitly obtain ${\cal Q}_{\cal I}$, partly because the convergence is very slow and Chv\'{a}tal rank tends to be very large. Schrijver (1980) and Conforti et al. (2014) contains a fuller discussion concerning the upper bounds of Chv\'{a}tal rank of a given polytope. } \vspace{0.1in}

\noindent
{\sc Remark 2.} Another (and perhaps more prevalent) approach for ensuring the integrality of a polytope is to show that the matrix determining it is a \emph{totally unimodular matrix (TUM)}. A matrix is called a TUM, if the determinant of every square submatrix can only take value $0$ or $\pm 1$, which automatically implies that the matrix must consist of $0$ and $\pm 1$. While the matrices in our numerical examples are TUMs, it is not at all clear if it is generally the case. At least, a matrix $\Xi$ obtained in our procedure typically violates a well known sufficient condition for being a TUM, which prohibits a matrix from having more than two non-zero entries in each columns, in addition to another requirement concerning the property of row sums. (See Schrijver (1980) for the detail). Note also that, since we fix the RHS of the system of inequalities, the total unimodularity of $\Xi$ is a sufficient condition, while our condition concerning Chv\'{a}tal rank is a necessary and sufficient condition for ${\cal P}$ to be integral.

\vspace{-0.05in}
\subsubsection*{Proof of Theorem \ref{main}}
\vspace{-0.05in}

Given the above argument, to prove Theorem \ref{main}, it suffices to show that for every $\pi \in {\cal P}$ and $|{\mathscr J}|$-dimensional vector $u=(u_{\cal J})_{{\cal J}\in \mathscr{J}}$, it holds that\begin{align}\label{wts1}
u\Xi\mbox{ is integral }\Longrightarrow (u\Xi)\pi\geq \lceil u\mathbbm{1}\rceil,
\end{align}
which implies that ${\cal P}={\cal P}^{(1)}$, and hence ${\cal P}={\cal P}_{\cal I}$. Since $\lceil u\mathbbm{1}\rceil=\lceil \sum_{{\cal J}\in {\mathscr J}}u_{\cal J}\rceil$ holds in (\ref{wts1}), it suffices to show that there exists a partition of $\mathscr J$ such that the sum of $u_{\cal J}$'s on each component is an integer. (Then the total sum of $u$ is the sum of finitely many integers.) For this purpose, we use a profile of patches $(B_{11}, B_{21},...,B_{J1})$ constructed by adjustments of indices such that for $k=1,2,...,J-1$, (i) $B_{k1}$ is taken from $B_k$, and (ii) $B_{k1}$ is undominated in ${\cal J}_k:=\{k,k+1,...,J\}$, while it is dominated in any ${\cal J}_{k'}$ with $k'<k$.  

Such a profile always exists as long as $p_{j'}\neq p_{j''}$ for all $j'\neq j''$. For example, suppose that there is some commodity for which all $p_j$'s take different values from each other. With no loss of generality, we may consider it as commodity $1$ and sort price vectors so that  $p_{11}<p_{21}<\dots <p_{J1}$. Then, the patch on $B_1$ containing the vector $(1/p_{11},0,...,0)$ is clearly undominated in ${\cal J}_1$, and hence it works as $B_{11}$. The patch on $B_2$ containing the vector $(1/p_{21},0,...,0)$ would work as $B_{21}$, since it is dominated by $B_{1}$ through $B_{11}$, but not by any other budgets. The rest of $B_{k1}$ could be defined in a similar vein. Even if there is no such commodity (as in Example \ref{ex4}), one can find some vector $d\in \mathbb{R}^n_{+}\setminus \{0\}$ on the unit sphere such that price vectors are sorted as $p_1\cdot d<p_2\cdot d<\dots <p_J\cdot d$ with suitable adjustment of indices, because $J<\infty$. Then, a profile of patches $(B_{11}, B_{21},...,B_{J1})$ can be constructed in a similar way to the preceding case. That is, $B_{k1}$ is defined as the patch on $B_k$ containing the consumption vector $d/(p_k\cdot d)$.\footnote{Thus, one can regard the preceding case as the special case where $d=(1,0,...,0)$ works.}

Once we have constructed a profile of patches $(B_{11}, B_{21},...,B_{J1})$ as above, letting ${\mathscr J}_1=\{{\cal J}\in \mathscr{J}: {\cal J}\ni 1\}$, $B_{11}$ is undominated in every ${\cal J}\in {\mathscr J}_1$. Recalling the definition of $\xi^{\cal J}_{11}={\bf 1}(B_{11}\mbox{ is undominated in } {\cal J})$, this in turn implies that\begin{align}\label{p1eq}
(u\Xi)_{11}=\sum_{{\cal J}\in {\mathscr{J}_1}}u_{\cal J},
\end{align}
where the subscript in the LHS indicates the coordinate corresponding to the patch $B_{11}$. Since the vector $u\Xi$ itself is assumed to be integral, the above sum is also integral.  

By the construction of the profile $(B_{11}, B_{21},...,B_{J1})$, $B_{21}$ is undominated in ${\cal J}$, if and only if ${\cal J}\in {\mathscr J}_2:=\{{\cal J}\in {\mathscr J}:{\cal J}\ni 2\mbox{ and }{\cal J}\not\ni 1\}$. Thus, it holds that\begin{align}
(u\Xi)_{21}=\sum_{{\cal J}\in {\mathscr J}_2}u_{\cal J},
\end{align}
which is also integral. In addition, it holds that ${\mathscr J}_1\cap {\mathscr J}_2=\emptyset$. Similarly, also for $k\geq 3$, the patch $B_{k1}$ is undominated in ${\cal J}$ if and only if ${\cal J}\in {\mathscr J}_k:=\{{\cal J}\in \mathscr{J}: {\cal J}\ni k\mbox{ and } {\cal J}\not\ni j\mbox{ for }j\le k-1\}$. Then, it holds that\begin{align}
(u\Xi)_{k1}=\sum_{{\cal J}\in {\mathscr J}_k}u_{\cal J},
\end{align}
which is integral. Moreover, ${\mathscr J}_k\cap {\mathscr J}_{k-1}=\emptyset$ for all $k$, and hence ${\mathscr J}_1, {\mathscr J}_2,...,{\mathscr J}_k$ are mutually exclusive. Repeating this process up to $k=J-1$, we obtain ${\mathscr J}_1, {\mathscr J}_2,...,{\mathscr J}_{J-1}$ that are mutually exclusive and $\bigcup^{J-1}_{k=1}{\mathscr J}_k=\mathscr{J}$. (Recall that ${\mathscr J}$ is the family of budgets
with at least two elements.) Thus, ${\mathscr J}_1, {\mathscr J}_2,...,{\mathscr J}_{J-1}$ forms a partition of $\mathscr{J}$, and the sum of $u_{\cal J}$'s is integral on each ${\mathscr J}_k$ as desired. \qed
\vspace{-0.1in}
\section{Duality and its implication}
\vspace{-0.1in}
\subsection{The maximal weight on rational types}

While Theorem \ref{main} tests the rationality of a given stochastic demand system through the condition $\Xi\pi\geq \mathbbm{1}$, in fact, the value of $\Xi\pi$ also contains some information concerning the ``degree" of (ir)rationality. To be more specific, we claim that when $\pi$ is not rationalizable, $\min_{{\cal J}\in \mathscr{J}}\xi^{\cal J}\cdot \pi\in [0,1)$ is equal to the maximal possible weight on rational behavioral types to explain $\pi$. To include the case of rationalizable stochastic demand systems, we introduce  $\xi^{\overline{\cal J}}=(1/J,...,1/J;...;1/J,...,1/J)\in [0,1]^I$ and $\overline{\mathscr{J}}=\mathscr{J}\cup \{\overline{\cal J}\}$. Note that $\xi^{\overline{\cal J}}\cdot \pi=1$ holds for any stochastic demand system $\pi$. Using this extended set of indices $\overline{\mathscr J}$, we have the following.

\begin{theorem}\label{dual}
For a given stochastic demand system $\pi$, it holds that\begin{align}\label{thm2}
\min_{{\cal J}\in \overline{\mathscr{J}}}\xi^{\cal J}\cdot \pi=\max_{\tau\geq 0:A\tau =\pi}\sum_{a\in {\cal A}^*}\tau_a.
\end{align}
\end{theorem}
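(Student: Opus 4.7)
Our plan is to view the maximum on the right-hand side of (\ref{thm2}) as the value of a linear program and then apply LP duality, with the rows of $\Xi$ (augmented by $\xi^{\overline{\cal J}}$) furnishing canonical dual feasible solutions. Concretely, the primal is
\[
f(\pi) := \max\Bigl\{ \sum_{a\in {\cal A}^*}\tau_a : \tau \in \mathbb{R}_+^{|{\cal A}|},\ A\tau = \pi \Bigr\},
\]
and its LP dual reads
\[
D(\pi) = \min\Bigl\{ \pi^\top y : y\cdot a \geq 1 \text{ for } a\in {\cal A}^*,\ y\cdot a \geq 0 \text{ for } a\in {\cal A}\setminus {\cal A}^*,\ y \text{ free}\Bigr\}.
\]
Strong LP duality gives $f(\pi)=D(\pi)$.

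For the inequality ``$\leq$'' in (\ref{thm2}), we would verify that each $\xi^{\cal J}$ with ${\cal J}\in \overline{\mathscr J}$ is dual feasible. For ${\cal J}\in \mathscr{J}$, nonnegativity of $\xi^{\cal J}$ yields $\xi^{\cal J}\cdot a\geq 0$ for every $a\in {\cal A}$, and Lemma \ref{lem_sarp} gives $\xi^{\cal J}\cdot a\geq 1$ for $a\in {\cal A}^*$. For $\xi^{\overline{\cal J}}=(1/J,\dots,1/J)$, direct computation shows $\xi^{\overline{\cal J}}\cdot a=1$ for every $a\in {\cal A}$. Weak duality then forces $f(\pi)\leq \xi^{\cal J}\cdot \pi$ for each ${\cal J}\in \overline{\mathscr J}$, hence $f(\pi)\leq \min_{{\cal J}\in \overline{\mathscr J}}\xi^{\cal J}\cdot \pi$.

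The reverse inequality requires a construction. Setting $\alpha := \min_{{\cal J}\in \overline{\mathscr J}} \xi^{\cal J}\cdot \pi$, the aim is to exhibit $\tau\geq 0$ with $A\tau=\pi$ and $\sum_{a\in {\cal A}^*}\tau_a = \alpha$. The key intermediate claim is the existence of $\mu^*\in {\cal P}$ such that $\alpha\mu^*\leq \pi$ component-wise. Granting this, Theorem \ref{main} provides $\lambda^*\geq 0$ with $\sum_a \lambda^*_a=1$ and $A^*\lambda^*=\mu^*$; setting $\tau_a^*:=\alpha\lambda^*_a$ for $a\in {\cal A}^*$ yields a nonnegative combination of rational types summing to $\alpha$ with $A^*\tau^*=\alpha\mu^*\leq \pi$. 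The residual $\eta:=\pi-\alpha\mu^*\geq 0$ has per-block sum $1-\alpha$; when $\alpha<1$, $\eta/(1-\alpha)$ lies in the simplex $\mathrm{conv}({\cal A})$ and decomposes as a convex combination of behavioral types, extending $\tau^*$ to the required $\tau$ whose rational weight is at least $\alpha$. Together with the bound from the previous step, this forces equality.

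The main obstacle is the existence of $\mu^*\in {\cal P}$ with $\alpha\mu^*\leq \pi$. We would approach this through LP duality on the auxiliary program
\[
\max\bigl\{\sigma : \mu \geq 0,\ \mu \leq \pi,\ \textstyle\sum_i \mu_{ji}=\sigma\ \forall j,\ \Xi \mu \geq \sigma \mathbbm{1}\bigr\},
\]
whose feasible set at $\sigma=\alpha$ is exactly $\alpha\,{\cal P}\cap \{\mu\leq \pi\}$. The ${\cal H}$-representation of ${\cal P}$ from Theorem \ref{main} is essential here: it permits the auxiliary LP's dual to be expressed in terms of nonnegative combinations of the $\xi^{\cal J}$'s together with block-sum Lagrange multipliers, and the remaining task is to show that the dual minimum is attained at one of the canonical certificates $\xi^{\cal J}$ for ${\cal J}\in \overline{\mathscr J}$. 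This is the step where the connection between Theorem \ref{main} (${\cal H}$-representation) and Theorem \ref{ks} (${\cal V}$-representation) is fully exploited.
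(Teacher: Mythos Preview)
Your ``$\leq$'' direction is correct and is exactly the paper's argument: each $\xi^{\cal J}$ is feasible in the dual program, so weak duality gives $f(\pi)\le \min_{{\cal J}\in\overline{\mathscr J}}\xi^{\cal J}\cdot\pi$.

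The gap is in the reverse direction. Your intermediate claim---existence of $\mu^*\in{\cal P}$ with $\alpha\mu^*\le\pi$---is \emph{equivalent} to $f(\pi)\ge\alpha$, not a step toward it. Indeed, if such $\mu^*$ exists then your decomposition yields $f(\pi)\ge\alpha$; conversely, if $f(\pi)\ge\alpha$, any optimal $\tau$ gives $\mu^*:=\alpha^{-1}\sum_{a\in{\cal A}^*}\tau_a a$. The auxiliary LP you set up has optimal value equal to $f(\pi)$ for the same reason, so asking that its dual ``be attained at one of the canonical certificates $\xi^{\cal J}$'' just restates the problem you started from: that the dual of the \emph{original} LP is attained at some $\xi^{\cal J}$. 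You have not supplied any argument for this step, and it is the entire content of the theorem's hard direction.

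The paper resolves this by a different, constructive route. It first proves Proposition~\ref{Converse_dual}(a): any $\tau$ achieving $P(\pi)$ must have support contained in a single ${\cal A}^{\cal J}$. The engine behind this is a combinatorial exchange argument (Lemma~\ref{lem_ind}, built on Lemma~\ref{dual_lem} and Claim~\ref{cl1}): given types $a_1,\dots,a_m$ with no common ${\cal J}$ such that all lie in ${\cal A}^{\cal J}$, one can swap chosen patches between two of the types, budget by budget along the ordered chain ${\cal J}_1\supset{\cal J}_2\supset\cdots$ used in the proof of Theorem~\ref{main}, to manufacture a rationalizable type out of two irrational ones while preserving the sum $\sum_k a_k$. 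This yields the strict inequality $D(\tfrac1m\sum_k a_k)>\tfrac1m\sum_k D(a_k)$, which forces any optimal $\tau$ to sit inside one ${\cal A}^{\cal J}$, and then $\xi^{\cal J}\cdot\pi=\sum_a\tau_a D(a)=P(\pi)=D(\pi)$ follows. Nothing in your LP reformulation substitutes for this exchange step; to complete your plan you would still need an argument of this kind (or an alternative of comparable strength) to pin down the auxiliary dual at a single $\xi^{\cal J}$.
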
\vspace{0.1in}

This theorem shows the duality between the characterization by Theorem \ref{ks} and that by Theorem \ref{main}. To see this, let $c\in \{0,1\}^{|\cal A|}$ such that $c_a={\bf 1}(a\in {\cal A}^*)$ for every $a\in \cal A$, and consider the linear programming\begin{align}\label{primal}
\max\, c\cdot \tau\,\mbox{ subject to } \pi=A\tau\mbox{ and }\tau\geq 0.
\end{align}
Then, it is obvious that $\pi$ is rationalizable, if and only if the value of the above problem, say, $P(\pi)$ is equal to $1$. The dual of the problem (\ref{primal}) is formulated as\begin{align}\label{dual_p}
\min\, \xi\cdot \pi\,\mbox{ subject to }\xi A\geq c,
\end{align}
in which, every $\xi^{\cal J}$ with ${\cal J}\in {\mathscr{J}}$ is feasible by Lemma \ref{lem_sarp}, while the feasibility of $\xi^{\overline{\cal J}}$ is obvious. Letting $D(\pi)$ be the value of problem (\ref{dual_p}), the duality theorem implies that $P(\pi)=D(\pi)$, but Theorem \ref{dual} makes a much stronger claim that $D(\pi)$ can be in fact achieved by one of finitely many vectors $\{{\xi}^{\cal J}\}_{{\cal J}\in \overline{\mathscr J}}$. 

It is obvious that for each $a\in {\cal A}$, $P(a)={\bf 1}(a\in {\cal A}^*)$, and hence, by the duality theorem, $D(a)={\bf 1}(a\in {\cal A}^*)$ holds as well. Thus, Theorem \ref{dual} is obvious for behavioral types. For each ${\cal J}\in \overline{\mathscr J}$, let us consider a subset of ${\cal A}$ that shares $\xi^{\cal J}$ as a solution to the dual problem (\ref{dual_p}):\begin{align}
{\cal A}^{\cal J}=\{a\in {\cal A}:D(a)=\xi^{\cal J}\cdot a\}.
\end{align}
Note that a single behavioral type may be contained in multiple ${\cal A}^{\cal J}$'s. In addition, since $\xi^{\overline{\cal J}}\cdot a=1$ for any $a\in {\cal A}^*$, it holds that ${\cal A}^*={\cal A}^{\overline{\cal J}}$. 
In fact, for a given stochastic demand system $\pi$, a representation $\pi=\sum_{a\in {\cal A}}\tau_aa$ \emph{achieves} the maximal possible weight on rational types in the sense that $P(\pi)=\sum_{a\in {\cal A}}\tau_aP(a)=\sum_{a\in {\cal A}^*}\tau_a$, if and only if the support of $\tau=(\tau_a)_{a\in {\cal A}}$ is a subset of some ${\cal A}^{\cal J}$ (${\cal J}\in \overline{\mathscr{J}}$). This property (in particular, ``if" part) plays a central role in the proof of Theorem \ref{dual}, while it seems also of independent interest. 

\begin{proposition}\label{Converse_dual}
(a) If a representation $\pi=\sum_{a\in {\cal A}}\tau_aa$ achieves $P(\pi)=\sum_{a\in {\cal A}^*}\tau_a$ for some $\tau\in \Delta({\cal A})$, then there exists some ${\cal J}\in \overline{\mathscr J}$ for which $\tau_a>0\Longrightarrow a\in {\cal A}^{\cal J}$. (b) Conversely, if $\pi$ is represented as $\pi=\sum_{a\in {\cal A}}\tau_aa$ for some $\tau\in \Delta({\cal A})$, with $\tau_a>0\Longrightarrow a\in {\cal A}^{\cal J}$ for some single ${\cal J}\in \overline{\mathscr{J}}$, then such a representation of $\pi$ achieves $P(\pi)=\sum_{a\in {\cal A}^*}\tau_a$. 
\end{proposition}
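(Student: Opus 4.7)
The plan is to prove part (b) first from LP weak duality, use it as the key lemma for Theorem \ref{dual}, and then derive part (a) from Theorem \ref{dual} via complementary slackness. A preliminary observation used in both parts is that for any behavioral type $a$ one has $D(a)=c_a$: this follows from $P(a)=\mathbf{1}(a\in {\cal A}^*)=c_a$ together with strong LP duality applied to (\ref{primal})--(\ref{dual_p}) with $\pi$ replaced by $a$.

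For part (b), suppose $\tau\in \Delta({\cal A})$ satisfies $\pi=\sum_{a}\tau_a a$ and $\tau_a>0 \Longrightarrow a\in {\cal A}^{{\cal J}}$ for a single ${\cal J}\in \overline{\mathscr{J}}$. Regardless of whether ${\cal J}\in \mathscr{J}$ or ${\cal J}=\overline{\cal J}$, expand
\begin{align*}
\xi^{{\cal J}}\cdot \pi \;=\; \sum_{a\in {\cal A}}\tau_a\left(\xi^{{\cal J}}\cdot a\right) \;=\; \sum_{a:\tau_a>0}\tau_a D(a) \;=\; \sum_{a:\tau_a>0}\tau_a c_a \;=\; \sum_{a\in {\cal A}^*}\tau_a,
\end{align*}
where the second equality uses the support assumption $a\in {\cal A}^{{\cal J}}$ and the third uses $D(a)=c_a$. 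Feasibility of $\xi^{{\cal J}}$ in the dual---ensured by Lemma \ref{lem_sarp} when ${\cal J}\in \mathscr{J}$, and by the direct computation $\xi^{\overline{\cal J}}\cdot a=1\ge c_a$ when ${\cal J}=\overline{\cal J}$---together with weak duality gives $P(\pi)\le \xi^{{\cal J}}\cdot \pi=\sum_{a\in {\cal A}^*}\tau_a$; conversely, $\tau$ is itself primal-feasible with objective value $\sum_{a\in {\cal A}^*}\tau_a$, so $P(\pi)\ge \sum_{a\in {\cal A}^*}\tau_a$. Equality follows.

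For part (a), suppose $\tau\in \Delta({\cal A})$ achieves $P(\pi)=\sum_{a\in {\cal A}^*}\tau_a$, so $\tau$ is primal-optimal for (\ref{primal}). By Theorem \ref{dual}, the dual value $D(\pi)=P(\pi)$ is attained by $\xi^{{\cal J}^*}$ for some ${\cal J}^*\in \overline{\mathscr{J}}$, so $\xi^{{\cal J}^*}$ is a dual optimum. LP complementary slackness---applicable because the primal constraint $A\tau=\pi$ is an equality (so the dual variable $\xi$ is free) and $\tau\ge 0$ is the primal sign restriction---forces $\tau_a>0\Longrightarrow \xi^{{\cal J}^*}\cdot a=c_a=D(a)$, which is exactly the defining condition $a\in {\cal A}^{{\cal J}^*}$.

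The main obstacle is conceptual rather than computational: one must avoid a circular dependence with Theorem \ref{dual}, since part (b) is invoked in proving that theorem, whereas part (a) relies on the theorem's full conclusion (attainment of the dual minimum within the finite family $\{\xi^{{\cal J}}\}_{{\cal J}\in \overline{\mathscr{J}}}$). Ordering the arguments as (b), then Theorem \ref{dual}, then (a) resolves this cleanly, and the remaining ingredients---weak duality and complementary slackness---are entirely standard.
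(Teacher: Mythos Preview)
Your argument for part (b) is correct and essentially the same as the paper's. Your complementary slackness argument for part (a), taking Theorem~\ref{dual} for granted, is also correct and is considerably slicker than the paper's route through Lemma~\ref{lem_ind}.

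The gap is in the claimed ordering. You assert that ``part (b) is invoked in proving [Theorem~\ref{dual}],'' and that therefore the sequence (b) $\to$ Theorem~\ref{dual} $\to$ (a) avoids circularity. But in the paper, the proof of Theorem~\ref{dual} uses part~(a), not part~(b): the essential step is to take a primal-optimal $\tau$, apply Proposition~\ref{Converse_dual}(a) to conclude the support lies in some ${\cal A}^{{\cal J}'}$, and then compute $\xi^{{\cal J}'}\cdot\pi=D(\pi)$. Part~(b) alone cannot yield Theorem~\ref{dual}: (b) says that \emph{if} a representation supported on a single ${\cal A}^{\cal J}$ exists then it is primal-optimal, but the content of Theorem~\ref{dual} is precisely that the dual optimum is attained at some $\xi^{\cal J}$, which is equivalent to knowing such a representation exists---i.e., to part~(a). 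So your sequence collapses to (a) $\Leftrightarrow$ Theorem~\ref{dual}, with neither proved independently.

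The paper breaks this loop by proving part~(a) directly, via the (laborious) Lemma~\ref{lem_ind} and its auxiliary Lemma~\ref{dual_lem}, both of which ultimately rest on Theorem~\ref{main}. Your complementary slackness observation is a genuine simplification of one implication, but it does not replace that independent work; to make your approach stand on its own you would need a proof of Theorem~\ref{dual} that does not already presuppose~(a).
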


Gathering together with Theorem \ref{dual}, if one wishes to represent $\pi$ putting weights on rational behavioral types as much as possible, then, it suffices to look at the set of types ${\cal A}^{\cal J}$ for which ${\cal J}$ achieving the LHS of (\ref{thm2}). On the other hand, if one has obtained a representation of $\pi$ only by using types in a certain ${\cal A}^{\cal J}$, then it already achieves the maximal possible weight on rational types, and hence, such a family of budget ${\cal J}$ achieves the LHS of (\ref{thm2}). That is, the set of stochastic demand systems is partitioned into subsimplices generated by $\{{\cal A}^{\cal J}\}_{{\cal J}\in \overline{\mathscr{J}}}$, each of which provides a representation of $\pi$ achieving $P(\pi)$, the maximal possible weight on rational behavioral types.

The preceding proposition also indicates what kind of mixture can ``improve" the fit to a rational choice model. It is not surprising that, even if a (deterministic) demand pattern of each person is not rational, a mixture of them across a population is more or less consistent with a RUM. As the simplest case, consider a mixture of demand behavior of two people, where both of their behavioral types violate SARP. Applying Proposition \ref{Converse_dual}, however, if (and only if) those behavioral types do not have any $\xi^{\cal J}$ as a common solution to (\ref{dual_p}), then any nontrivial mixture of their behavior can admit positive weights on some rational behavioral types. A generalization of this property is formally proved as Lemma \ref{lem_ind} in the proof of Proposition \ref{Converse_dual}. (See also Lemma \ref{dual_lem} in Appendix.) 

Lastly, we refer to the logical relationship across results in this paper. It is not difficult to see that the statement of Theorem \ref{dual} implies Theorem \ref{main}. Indeed, $\pi\in \cal P$ holds if and only if the RHS of (\ref{thm2}) is equal to $1$, and $\min_{{\cal J}\in \overline{\mathscr{J}}}\xi^{\cal J}\cdot \pi=1$ immediately implies that $\Xi\pi\geq \mathbbm{1}$. Nevertheless, the proof of Theorem \ref{dual} in fact depends on Proposition \ref{Converse_dual}(a), and the proof of the latter in turn depends on Theorem \ref{main}. (The dependence is found in the proofs of lemmas in Appendix.) Thus, in the sense that assuming one of them derives others, Theorem \ref{main}, Proposition \ref{Converse_dual}(a) and Theorem \ref{dual} are logically equivalent.
 
\vspace{-0.1in}
\subsection{Numerical examples}

We raise three numerical examples to see how Theorem \ref{dual} and Proposition \ref{Converse_dual} actually work. First, we revisit Example \ref{ex3} in the preceding section, where $\pi$ is not rationalizable. In this example, $D(\pi)$ is attained by $\xi^{\cal J}$ corresponding to ${\cal J}$ containing three budgets, when there are only two commodities. That is, even in the two-commodity setting, choice patterns over more than two budgets may have a certain revealed preference implication. To be more specific, although they do not affect the result of ``0-1" test like Theorem \ref{main}, it could have some information in deriving (a specific type of) degree of rationality. Second, we reconsider Example \ref{ex4}, where despite the consistency with WARP, one cannot put any positive weight on rationalizable behavioral types to explain $\pi$. Lastly, we consider the case where a mixture of irrational behavioral types can admit a positive weight on rational behavioral types, and even can be fully rationalizable.

\setcounter{example}{2}

\begin{example}{\sc (Cont.d.)} Revisit the setting of Example \ref{ex3} in the previous section. There, it holds that $$D(\pi)=\xi^{\{1,2,3\}}\cdot \pi=\frac{2}{5},$$
and hence, by Theorem \ref{dual}, the maximal possible weight on rationalizable behavioral type is equal to this value of $2/5$. Recall also that $\xi^{\{1,2\}}\cdot \pi=\xi^{\{2,3\}}=7/10$ and $\xi^{\{1,3\}}\cdot \pi>1$. Hence, it suffices to look at pairs of budgets to conclude that this $\pi$ is not rationalizable, but we cannot find the maximal possible weight on rationalizable types without considering the triple of budgets $\{1,2,3\}$. It is straightforward to check that the representation of $\pi$ stated in the previous section in fact attains it: amongst five behavioral types raised there, $a^1$, $a^2$ and $a^3$ are rationalizable and the total weight on them is equal to $2/5$. In addition, it can be also confirmed that all of these five types are contained in ${\cal A}^{\{1,2,3\}}$: it holds that $1=\xi^{\{1,2,3\}}\cdot a^1=\xi^{\{1,2,3\}}\cdot a^2=\xi^{\{1,2,3\}}\cdot a^3$ and that $0=\xi^{\{1,2,3\}}\cdot a^4=\xi^{\{1,2,3\}}\cdot a^5$. 

On the other hand, $\pi$ can be also represented as the convex combination of the behavioral types $\bar{a}^1=(1,0,0;0,1,0;0,0,1)^t$, $\bar{a}^2=a^2$, $\bar{a}^3=(0,0,1;0,0,1;1,0,0)^t$, $\bar{a}^4=a^4$, and $\bar{a}^5=a^5$ with $\tau_1=1/10$, $\tau_2=1/10$, $\tau_3=1/10$, $\tau_4=4/10$, and $\tau_5=3/10$. Amongst these behavioral types, only $\bar{a}^1$ and $\bar{a}^2$ are rationalizable, and hence, in this representation, the total weight on rationalizable types is equal to $1/5<D(\pi)$. By Proposition \ref{Converse_dual}, this is caused by the fact that these types are not simultaneously contained in the same ${\cal A}^{\cal J}$ for any ${\cal J}\in \overline{\mathscr{J}}$. Indeed, $\bar{a}^1$ is only contained in ${\cal A}^{\overline{\cal J}}$, while $\bar{a}^3$, $\bar{a}^4$ and $\bar{a}^5$ are not contained in it.  
\end{example} 

\setcounter{example}{3}

\begin{example}
{\sc (Cont.d.)} Revisit the setting of Example \ref{ex4} in the previous section, wherein it holds that
$D(\pi)=\xi^{\{1, 2, 3\}} \cdot \pi=0$.
Accordingly, $\pi$ is irrational and the maximal possible weight on rationalizable behavioral types equals to zero, when it is consistent with WARP. In particular, the representation described in the previous section, $\pi$ as the midpoint of $a^1=(0, 1, 0, 0; 0, 1, 0, 0; 0, 1, 0, 0)^t$ and $a^2=(0, 0, 1, 0; 0, 0, 1, 0; 0, 0, 1, 0)^t$, attains $D(\pi)$. This follows from Proposition \ref{Converse_dual}, given the fact that $\xi^{\{1, 2, 3\}} \cdot a^1=\xi^{\{1, 2, 3\}} \cdot a^2=0$, which means that both $a^1$ and $a^2$ are contained in ${\cal A}^{\{1, 2, 3\}}$ (as can be also directly confirmed).  
\end{example}

\begin{example}\label{ex5}
Consider the budget lines in Figure \ref{fig_maxpatch}, and the following pair of behavioral types: $a^1=(1,0,0;0,0,1;1,0,0)^t$ and $a^2=(0,0,1;1,0,0;0,0,1)^t$. It is easy to check that both of them are not rationalizable, and that they do not share any $\xi^{\cal J}$ as a common solution to (\ref{dual_p}). Then, by Proposition \ref{Converse_dual}, any nontrivial mixture of them can admit a positive weight on rationalizable behavioral types. For example, letting $$\pi=\left(\frac{1}{2},0,\frac{1}{2};\frac{1}{2},0,\frac{1}{2};\frac{1}{2},0,\frac{1}{2} \right)^t,$$ this is even rationalizable, as $\Xi\pi\geq \mathbbm{1}$ is easily confirmed. In particular, $P(\pi)=1$ is achieved by representing $\pi$ as the midpoint of the following (rationalizable) behavioral types: ${\bar a}^1=(1,0,0;1,0,0;1,0,0)^t$ and ${\bar a}^2=(0,0,1;0,0,1;0,0,1)^t$.
\end{example}

\vspace{-0.1in}
\subsection{The proofs of results in this section}

\subsubsection*{Proof of Theorem \ref{dual} (based on Proposition \ref{Converse_dual})}

Given that $P(\pi)=D(\pi)$ holds by the duality theorem, the statement is essentially equivalent to $D(\pi)=\min_{{\cal J}\in \overline{\mathscr{J}}}\xi^{\cal J}\cdot \pi$. Since every $\xi^{\cal J}$ is feasible in the problem (\ref{dual_p}), it suffices to show the existence of some ${\cal J}'\in \overline{\mathscr{J}}$ for which $D(\pi)=\xi^{{\cal J}'}\cdot \pi$. This is trivial if $\pi$ is rationalizable, since $\pi\in \mbox{conv.}({\cal A}^{\overline{\cal J}})$ has to hold. Even if $\pi$ is not rationalizable, the claim can be easily proved, once we establish Proposition \ref{Converse_dual}(a). To see this, suppose that $\pi$ is not rationalizable and that $P(\pi)=\sum_{a\in {\cal A}^*}\tau_a$ for some $\tau\in \Delta(\cal A)$. Applying Proposition \ref{Converse_dual}(a), it holds that $\tau_a>0\Longrightarrow a\in {\cal A}^{{\cal J}'}$ for some ${\cal J}'\in \mathscr{J}$, which also implies that $\tau_a>0\Longrightarrow\xi^{{\cal J}'}\cdot a=D(a)$.\footnote{Note that ${\cal J}'$ must be found in ${\mathscr J}$, rather than $\overline{\mathscr{J}}$, since $\pi\notin {\cal P}$ (i.e. $\pi$ is not rationalizable) is assumed.} This leads to $$\xi^{{\cal J}'}\cdot \pi=\sum_{a\in {\cal A}^{{\cal J}'}}\tau_a D(a)=\sum_{a\in {\cal A}^{{\cal J}'}\cap {\cal A}^*}\tau_a=P(\pi)=D(\pi),$$ 
which is what we have to show.  \qed

\vspace{-0.05in}
\subsubsection*{Proof of Proposition \ref{Converse_dual}}
\vspace{-0.05in}

We start from part (b) of the statement. Suppose that $\pi$ is represented as a convex combination $\pi=\sum_{a\in {\cal A}}\tau_aa$ for some $\tau\in \Delta({\cal A})$, with obeying $\tau_a>0\Longrightarrow a\in {\cal A}^{\cal J}$ for a common ${\cal J}\in \overline{\mathscr J}$. It holds that\begin{align}
\xi^{\cal J}\cdot \pi=\xi^{\cal J}\cdot \left(\sum_{a\in {\cal A}}\tau_aa \right)=\sum_{a\in {\cal A}}\tau_a(\xi^{\cal J}\cdot a)=\sum_{a\in {\cal A}}\tau_aD(a).
\end{align}
Moreover, since $D(a)={\bf 1}(a\in {\cal A}^*)$, it also holds that $\sum_{a\in {\cal A}}\tau_aD(a)=\sum_{a\in {\cal A}^*}\tau_a$, and hence, $\xi^{\cal J}\cdot \pi=\sum_{a\in {\cal A}^*}\tau_a$. Thus, if $\xi^{\cal J}\cdot \pi>D(\pi)$ were to hold, by the duality theorem, we would have $\sum_{a\in {\cal A}^*}\tau_aa>P(\pi)$. However, since $\pi=\sum_{a\in {\cal A}}\tau_aa$ is assumed, this contradicts the definition of $P(\pi)$. Hence, it must hold that $\xi^{\cal J}\cdot \pi=D(\pi)$, and the duality theorem implies that $P(\pi)=\sum_{a\in {\cal A}^*}\tau_aa$ as desired.

To prove the other direction (part (a)), the following lemma plays a key role. This is a generalization of the phenomenon observed in Example \ref{ex5}, but the formal proof, which is postponed to Appendix, is rather involved.

\begin{lemma}\label{lem_ind}
Fix $a_1,a_2,...,a_m\in {\cal A}$ ($m\geq 2$) for which there exists no ${\cal J}\in \overline{\mathscr J}$ such that $a_1,a_2,...,a_m\in {\cal A}^{\cal J}$. If a stochastic demand system $\pi$ is represented as $$\pi=\frac{1}{m}a_1+\frac{1}{m}a_2+\dots +\frac{1}{m}a_m,$$ then it holds that $$D(\pi)>\frac{1}{m}\sum^m_{k=1}D(a_k).$$
\end{lemma}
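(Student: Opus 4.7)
I would argue by contradiction, assuming $D(\pi) = \frac{1}{m}\sum_{k=1}^m D(a_k)$ and deriving a violation of the hypothesis that no $\mathcal{J}\in\overline{\mathscr{J}}$ has $a_1,\dots,a_m\in\mathcal{A}^{\mathcal{J}}$. By the LP duality theorem, $D(\pi)=P(\pi)=\max\{c\cdot\tau:A\tau=\pi,\tau\ge 0\}$ where $c_a=\mathbf{1}(a\in\mathcal{A}^*)$. The uniform representation $\tau^0$ with $\tau^0_{a_k}=1/m$ is primal-feasible and yields $c\cdot\tau^0=\frac{1}{m}\sum_k D(a_k)$, so the inequality $D(\pi)\ge\frac{1}{m}\sum_k D(a_k)$ is automatic; the task is to rule out equality. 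Under the contradiction hypothesis, $\tau^0$ is primal-optimal, and complementary slackness forces any dual-optimal $\xi^*$ of $\min\{\xi\cdot\pi:\xi A\ge c\}$ to satisfy $\xi^*\cdot a_k=c_{a_k}=D(a_k)$ for every $k$.

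The heart of the argument is then to upgrade such a generic dual optimum $\xi^*$ to one of the canonical vectors $\xi^{\mathcal{J}}$ with $\mathcal{J}\in\overline{\mathscr{J}}$. Once this is done, the equalities $\xi^{\mathcal{J}}\cdot a_k=D(a_k)$ translate, by the very definition of $\mathcal{A}^{\mathcal{J}}$, into $a_k\in\mathcal{A}^{\mathcal{J}}$ for every $k$, directly contradicting the hypothesis and completing the proof. To carry out the upgrade I would invoke Theorem \ref{main}: the integrality of $\mathcal{P}$, combined with the fact that $\{\xi^{\mathcal{J}}\}_{\mathcal{J}\in\overline{\mathscr{J}}}$ are exactly the (nontrivial) facet-defining vectors of the $\mathcal{H}$-representation of $\mathcal{P}$, should ensure via a Farkas-type polyhedral argument that the affine slice ``$\xi$ dual-feasible and tight on $a_1,\dots,a_m$'', whenever nonempty, already meets $\{\xi^{\mathcal{J}}\}_{\mathcal{J}\in\overline{\mathscr{J}}}$. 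Concretely, I would reformulate the tightness-on-$a_k$ conditions as an auxiliary feasibility LP whose structure mirrors the one defining $\mathcal{P}$, and use integrality to transfer a real-valued witness $\xi^*$ to a canonical binary witness $\xi^{\mathcal{J}}$.

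\textbf{Main obstacle.} The crux is exactly this upgrading step. The dual-feasible polyhedron $\{\xi:\xi A\ge c\}$ is high-dimensional and has many vertices that are not among the binary vectors $\xi^{\mathcal{J}}$, so a naive extremality argument does not pick out a canonical solution. What must be extracted from Theorem \ref{main} is a structural statement about the dual-optimal face---essentially that the simultaneous-tightness conditions on $a_1,\dots,a_m$ can always be realized within the finite canonical family $\{\xi^{\mathcal{J}}\}_{\mathcal{J}\in\overline{\mathscr{J}}}$---and packaging this cleanly (likely via the auxiliary appendix lemma to which the paper alludes) is where the genuine technical work lies. Circularity with respect to Theorem \ref{dual} must also be carefully avoided, since Lemma \ref{lem_ind} is logically prior: only Theorem \ref{main} and the duality theorem are available at this stage.
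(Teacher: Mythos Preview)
Your reduction is clean up to the point you flag: complementary slackness does force any dual optimum $\xi^*$ to be tight on every $a_k$, and the remaining task is exactly the ``upgrading'' of $\xi^*$ to some canonical $\xi^{\mathcal J}$. But notice that this upgrading step is \emph{equivalent} to Theorem~\ref{dual} restricted to $\pi=\frac{1}{m}\sum_k a_k$: once one knows the dual optimum is attained at some $\xi^{\mathcal J}$, the chain $\xi^{\mathcal J}\cdot\pi=\frac{1}{m}\sum_k\xi^{\mathcal J}\cdot a_k\ge\frac{1}{m}\sum_k D(a_k)$ forces tightness on every $a_k$ and hence $a_k\in\mathcal A^{\mathcal J}$ for all $k$. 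Since the paper's logical order is Theorem~\ref{main} $\Rightarrow$ Lemma~\ref{lem_ind} $\Rightarrow$ Proposition~\ref{Converse_dual}(a) $\Rightarrow$ Theorem~\ref{dual}, your route is circular unless you can prove the upgrading from Theorem~\ref{main} alone. The gestures toward a ``Farkas-type polyhedral argument'' and an ``auxiliary feasibility LP'' do not do this: integrality of the \emph{primal} polytope $\mathcal P$ gives no direct control over the vertex structure of the \emph{dual} feasible region $\{\xi:\xi A\ge c\}$, whose extreme points are not in general the $\xi^{\mathcal J}$'s. So the obstacle you identify is not a packaging issue; it is the whole content of the lemma.

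The paper does not attempt any such polyhedral shortcut. It argues constructively by induction on $m$. The base case $m=2$ (Lemma~\ref{dual_lem}) is handled by an \emph{exchange} of choices between $a_1$ and $a_2$ across budgets---using the same ordered-budget device $(B_{11},\dots,B_{J1})$ from the proof of Theorem~\ref{main}---to manufacture $\hat a_1\in\mathcal A^*$ and $\hat a_2$ with $\hat a_1+\hat a_2=a_1+a_2$, giving $P(\pi)\ge 1/2>0$ directly. The inductive step splits into three cases; the delicate one (every proper subset of $\{a_1,\dots,a_m\}$ \emph{does} lie in a common $\mathcal A^{\mathcal J}$) is again resolved by an exchange argument (Claim~\ref{cl1}) that perturbs $a_1,a_2$ so as to throw a proper subset out of every $\mathcal A^{\mathcal J}$, whereupon the induction hypothesis applies. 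The substantive work is thus combinatorial rather than LP-geometric, and nothing in the paper suggests the upgrading you need can be obtained from Theorem~\ref{main} short of essentially reproving Lemma~\ref{lem_ind}.
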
\vspace{0.25in}

Admitting this lemma, the rest of the proof is as follows. Suppose that a given stochastic demand system $\pi$ is represented as a convex combination of $a_1,a_2,...,a_m$ for which there is no ${\cal J}\in \overline{\mathscr{J}}$ such that $a_1,a_2,...,a_m\in {\cal A}^{\cal J}$. Letting $\pi=\sum^m_{k=1}\tau_ka_k$ with $\tau_1\geq \tau_2\geq\dots \geq \tau_m\geq 0$ with $\sum^m_{k=1}\tau_k=1$, it holds that\begin{align}
\pi &= \tau_m(a_1+\dots +a_m)+(\tau_{m-1}-\tau_m)(a_1+\dots a_{m-1})+\dots+(\tau_1-\tau_2)a_1\\
&=m\tau_m\left(\frac{1}{m}\sum^m_{k=1}a_k\right)+(m-1)(\tau_{m-1}-\tau_m)\left(\frac{1}{m-1}\sum^{m-1}_{k=1}a_k\right)+\dots+(\tau_1-\tau_2)a_1\\
&=\sum^{m}_{l=1}(m-l+1)(\tau_{m-l+1}-\tau_{m-l+2})\left(\frac{1}{m-l+1}\sum^{m-l+1}_{k=1}a_k\right),
\end{align}
where we let $\tau_{m+1}=0$. For each $l=1,2,...,m$, $\left(\frac{1}{m-l+1}\sum^{m-l+1}_{k=1}a_k\right)$ is a stochastic demand system, and nonnegative numbers $(m-l+1)(\tau_{m-l+1}-\tau_{m-l+2})$ for $l=1,2,...,m$, add up to $1$. Indeed, it holds that\begin{align}
\sum^{m}_{l=1}(m-l+1)(\tau_{m-l+1}-\tau_{m-l+2})=\sum^m_{k=1}\tau_k=1.
\end{align}
Using this, we obtain that\begin{align}
D(\pi)&\geq \sum^{m}_{l=1}(m-l+1)(\tau_{m-l+1}-\tau_{m-l+2})D\left(\frac{1}{m-l+1}\sum^{m-l+1}_{k=1}a_k\right)\\
&>\sum^m_{k=1}\tau_kD(a_k)=\sum^m_{k=1}\tau_k{\bf 1}(a_k\in {\cal A}^*),
\end{align}
where the first inequality follows from the concavity of $D(\cdot)$, while the latter holds by Lemma \ref{lem_ind}.\footnote{The concavity of $D(\cdot)$ follows from the fact that $D(\cdot)=P(\cdot)$ by the duality theorem. The concavity of $P(\cdot)$ is rather obvious, given that it is the value of the maximization problem (\ref{primal}).} This means that $P(\pi)$, which is equal to $D(\pi)$, is not achieved by $\tau\in \Delta({\cal A})$ whose support is not restricted to some ${\cal A}^{\cal J}$. This completes the proof of Proposition \ref{Converse_dual}(a). \qed

\vspace{-0.1in}
\section{Conclusion}

In this paper, we have developed a new approach to nonparametric characterization for RUMs. A keystone of our analysis is the construction of a matrix capturing the structures of revealed preference relations across patches in each subfamily of budgets. Then, using this matrix, we provide a quantifier-free necessary and sufficient condition under which a given stochastic demand system is rationalizable by a RUM. In our characterization, the set of rationalizable demand systems is captured as an intersection of finitely many half spaces, which corresponds the dual of the ``vertex-based" characterization by KS. 

Our characterization is something beyond checking the consistency with RUMs in that, especially when a given demand system is not rationalizable, one can simultaneously obtain (i) subfamily of budgets in which cyclical choices occur with positive probabilities and (ii) the maximal possible weight on rational behavioral types in a population. The former could be potentially useful to explore causes of irrational choices by checking, for example, any common structure among families of budgets in which irrational choices are inevitable. On the other hand, the latter would suggest the possibility of constructing some ``non-binary" test or rationality indices for RUMs. Such a work could be related to the index of rationality by Apesteguia and Ballester (2015), which is based on stochastic choices, as well as other rationality indices for deterministic models including those explained in the textbook by Chambers and Echenique (2018). 

As in KS and other related papers, potentially, the results in this paper can also be applied to empirical analysis. To deal with samples of choices rather than a choice distribution in a population level, one needs to establish some procedure for the statistical implementation. In the framework of consumer theory, KS provides a statistical test using bootstrap, which in fact uses theoretical nature of the dual of their characterization without explicitly knowing it. Now, having an explicit formulation of it by Theorem \ref{main}, one may further develop the statistical procedure for testing the consistency with RUMs. For example, as mentioned in KS, we may appeal to some technology developed in the literature of \emph{generalized moment selections (GMS)} such as Andrews and Soares (2010), Bugni (2010) and Canay (2010).\footnote{Note that Hoderlein and Stoye (2014) has actually developed a statistical procedure for their WARP test based on techniques along this line.} Amongst others, a recent work by Cox and Shi (2023) provides a tractable procedure for testing for moment inequality models that does not depend on simulation and tuning parameter.

Lastly, the approach in this paper seems also applicable to other models along the line of KS, such as a price preference model by Deb et al. (2023) and even in a game theoretic framework dealt with in Lazzati et al. (2024). In these models, stochastic choices are captured as a mixture of model-consistent deterministic choices. As in the proof of Theorem \ref{main}, the notion of Chv\'{a}tal closure is useful to check if a characterization for deterministic choices directly extends to that for mixtures of them. If it does, then one could obtain a dual representation, possibly with some economic implications from it. Even if not, then, the constraints newly added by taking Chv\'{a}tal closure could suggest additional behavioral restrictions to be considered.

\section*{Appendix}\vspace{-0.1in}
\subsection*{Proof of Lemma \ref{lem_ind}}

The proof of Lemma \ref{lem_ind} needs the following auxiliary result.

\begin{lemma}\label{dual_lem}
Fix a pair of behavioral types $a_1,a_2\in {\cal A}$ for which there exists no ${\cal J}\in \overline{\mathscr J}$ such that $a_1,a_2\in {\cal A}^{\cal J}$. If a stochastic demand system $\pi$ is represented as $$\pi=\frac{1}{2}a_1+\frac{1}{2}a_2,$$ then it holds that $$D(\pi)>\frac{1}{2}D(a_1)+\frac{1}{2}D(a_2).$$
\end{lemma}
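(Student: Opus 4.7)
The plan is to split the proof into three subcases according to which of $a_1, a_2$ are rationalizable, handling the two easy cases via Theorem~\ref{main} and reserving the substantive work for the case in which both types are irrational.

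\medskip

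\textbf{Case A (both $a_1, a_2 \in \mathcal{A}^*$).} Then $a_1, a_2 \in \mathcal{A}^{\overline{\mathcal{J}}}=\mathcal{A}^*$, directly contradicting the hypothesis. This subcase is vacuous.

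\medskip

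\textbf{Case B (exactly one of $a_1, a_2$ in $\mathcal{A}^*$).} WLOG $a_1\in \mathcal{A}^*$ and $a_2\notin \mathcal{A}^*$, so $D(a_1)=1$, $D(a_2)=0$, and we must show $D(\pi)>\tfrac{1}{2}$. The plan is to verify $\Xi\pi\geq \mathbbm{1}$ and conclude $D(\pi)=1$ by Theorem~\ref{main}. Fix any $\mathcal{J}\in\mathscr{J}$: Lemma~\ref{lem_sarp} gives $\xi^{\mathcal{J}}\cdot a_1\geq 1$, and the hypothesis forbids $\xi^{\mathcal{J}}\cdot a_1=1$ together with $\xi^{\mathcal{J}}\cdot a_2=0$ (which would place both in $\mathcal{A}^{\mathcal{J}}$). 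Hence if $\xi^{\mathcal{J}}\cdot a_2=0$ then $\xi^{\mathcal{J}}\cdot a_1\geq 2$, and otherwise $\xi^{\mathcal{J}}\cdot a_2\geq 1$; in either case $\xi^{\mathcal{J}}\cdot \pi\geq 1$.

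\medskip

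\textbf{Case C (both $a_1, a_2\notin \mathcal{A}^*$).} Now $D(a_1)=D(a_2)=0$ and we must show $D(\pi)>0$. The starting point is that any $\tau$ with $A\tau=\pi$ is supported on the ``compatible'' family
\[
T:=\{a_1^S : S\subseteq \{1,\dots,J\}\},
\]
where $a_1^S(B_j)=a_2(B_j)$ if $j\in S$ and $a_1(B_j)$ otherwise, since $\pi^{(j)}$ is supported only on $\{a_1(B_j), a_2(B_j)\}$. For every $S$ one has the identity $\pi=\tfrac{1}{2}(a_1^S+a_1^{S^c})$, so exhibiting any $S^*$ with $a_1^{S^*}\in \mathcal{A}^*$ already yields $P(\pi)\geq \tfrac{1}{2}>0$.

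It therefore suffices to show $T\cap \mathcal{A}^*\neq \emptyset$. A budget-by-budget maximization gives
\[
\max_{t\in T}\xi^{\mathcal{J}}\cdot t \;=\; \sum_{j\in \mathcal{J}}\max\!\bigl(\xi^{\mathcal{J}}_{j,a_1(B_j)},\, \xi^{\mathcal{J}}_{j,a_2(B_j)}\bigr),
\]
and the hypothesis $\mathcal{V}(a_1)\cap \mathcal{V}(a_2)=\emptyset$ guarantees that for every $\mathcal{J}\in\mathscr{J}$ some $j\in \mathcal{J}$ has at least one of $a_1(B_j), a_2(B_j)$ undominated in $\mathcal{J}$; hence $\max_{t\in T}\xi^{\mathcal{J}}\cdot t\geq 1$ for each $\mathcal{J}$. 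To upgrade this pointwise existence to a single $t\in T$ with $\Xi t\geq \mathbbm{1}$, I plan to argue iteratively: starting from $t^{(0)}=a_1$, at each step pick a currently violated $\mathcal{J}$ and a budget $j\in \mathcal{J}$ at which swapping $t(B_j)$ from the $a_1$ choice to the $a_2$ choice (or vice versa) lands on a patch undominated in $\mathcal{J}$; the accumulation of new violations at $\mathcal{J}'$ with $j\in \mathcal{J}'$ is controlled by processing budgets along a well-chosen linear order on prices, as in the construction $(B_{11},\dots,B_{J1})$ used in the proof of Theorem~\ref{main}.

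\medskip

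\textbf{Main obstacle.} The bulk of the work lies in Case C, specifically in promoting the pointwise-in-$\mathcal{J}$ bound $\max_{t\in T}\xi^{\mathcal{J}}\cdot t\geq 1$ (which is almost immediate from $\mathcal{V}(a_1)\cap\mathcal{V}(a_2)=\emptyset$) to a single $t\in T\cap\mathcal{A}^*$. The delicate point is that curing a violation at $\mathcal{J}$ by a single-budget swap can create a new violation at some $\mathcal{J}'$ that shares that budget, and one must exploit the geometric ordering of patches along a price direction -- rather than treating $T$ as an abstract Boolean lattice -- to force the iteration to terminate. An LP-duality alternative (showing that a dual-feasible $\xi^*$ jointly optimal at $a_1$ and $a_2$ always admits a $\xi^{\mathcal{J}}$-representative) is tempting but appears to come too close to re-deriving Theorem~\ref{dual} itself, which cannot yet be invoked.
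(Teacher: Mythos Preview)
Your Cases~A and~B coincide with the paper's argument. Case~C also aims at exactly the paper's target: exhibit some $S\subseteq\{1,\dots,J\}$ with $a_1^{S}\in\mathcal{A}^*$, which via $\pi=\tfrac12(a_1^{S}+a_1^{S^c})$ gives $P(\pi)\ge\tfrac12>0$. The gap lies in how you locate $S$. Your iterative single-budget swap, guided by the \emph{fixed} price-direction order $(B_{11},\dots,B_{J1})$ from the proof of Theorem~\ref{main}, does not obviously succeed: that construction furnishes a specific undominated patch $B_{k1}$ on each $B_k$, but here you are confined to the two patches $a_1(B_k)$ and $a_2(B_k)$, neither of which need be undominated in $\{k,\dots,J\}$. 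Processing budget $k$ along a price order can therefore leave every $\mathcal{J}\in\mathscr{J}_k$ violated, and the ``swap to cure one $\mathcal{J}$'' loop has no evident potential function forcing termination.

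The paper's construction sidesteps this by relabeling budgets \emph{dynamically}. At step $k$ the hypothesis forbids $\xi^{\mathcal{J}_k}\cdot a_1=\xi^{\mathcal{J}_k}\cdot a_2=0$ (else $a_1,a_2\in\mathcal{A}^{\mathcal{J}_k}$), so there exists some $j_k\in\{k,\dots,J\}$ at which $a_1$ or $a_2$ already chooses a patch undominated in $\{k,\dots,J\}$; relabel that budget as $k$ and set $\hat a_1(B_k)$ to that choice (swapping from $a_2$ if necessary). Then $\hat a_1(B_k)$ is undominated in every $\mathcal{J}\subseteq\{k,\dots,J\}$ with $k\in\mathcal{J}$, so all of $\mathscr{J}_k$ is settled at once; since later steps touch only budgets $k+1,\dots,J$, earlier $\mathscr{J}_{k'}$ are never disturbed. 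Your ``process along a linear order'' intuition is right, but the order must be read off from where $a_1,a_2$ already attain undominated patches, not from a price direction alone.
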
\vspace{0.02in}

\begin{proof}
The condition in the statement implies that at least one of $a_1$ and $a_2$ is outside of ${\cal A}^*$, since $a\in {\cal A}^*\iff \xi^{\overline{\cal J}}\cdot a=1=D(a)$. With no loss of generality, we may assume that $a_2\notin {\cal A}^*$. 

Suppose that $a_1\in {\cal A}^*$. Since $\pi=\frac{1}{2}a_1+\frac{1}{2}a_2$, it holds that for each ${\cal J}\in \mathscr{J}$, $$\xi^{\cal J}\cdot \pi=\frac{1}{2}\xi^{\cal J}\cdot a_1+\frac{1}{2}\xi^{\cal J}\cdot a_2\geq 1,$$ where the inequality holds, because $\xi^{\cal J}\cdot a_1\geq 1$ by Lemma \ref{lem_sarp}, $\xi^{\cal J}\cdot a_1=1\Longrightarrow \xi^{\cal J}\cdot a_2\geq 1$ and $\xi^{\cal J}\cdot a_2=0\Longrightarrow \xi^{\cal J}\cdot a_1\geq 2$.\footnote{The latter two properties are due to the assumption that there is no ${\cal  J}$ for which $a_1,a_2\in {\cal A}^{\cal J}$. For example, if $\xi^{\cal J}\geq a_2=0$, then, $a_2\in {\cal A}^{\cal J}$, and hence $a_1\notin {\cal A}^{\cal J}$ must hold. Since $a_1\in {\cal A}^*$ is assumed, the latter implies that $\xi^{\cal J}\cdot a_1>D(a_1)=1$. In addition, since $\xi^{\cal J}\cdot a_1$ must be an integer, it holds that $\xi^{\cal J}\cdot a_1\geq 2$. } Then, Theorem \ref{main} tells us that this $\pi$ is rationalizable, and hence, we obtain $$D(\pi)=P(\pi)=1>\frac{1}{2}=\frac{1}{2}D(a_1)+\frac{1}{2}D(a_2)$$ as desired.

Now, we turn to the case of $a_1\notin {\cal A}^*$. In fact, under the condition that there is no ${\cal J}\in \overline{\mathscr J}$ such that $a_1,a_2\in {\cal A}^{\cal J}$, we can find some $\hat{a}_1,\hat{a}_2\in {\cal A}$ such that $\hat{a}_1\in {\cal A}^*$ and $a_1+a_2=\hat{a}_1+\hat{a}_2$. In particular, the latter immediately implies that $$\pi=\frac{1}{2}\hat{a}_1+\frac{1}{2}\hat{a}_2,$$ which in turn implies that $$D(\pi)=P(\pi)\geq \frac{1}{2}>0=\frac{1}{2}D(a_1)+\frac{1}{2}D(a_2).$$ Thus, the remaining part of the proof is showing the existence of these $\hat{a}_1$ and $\hat{a}_2$, which we construct as follows. Note that the broad idea is that we ``exchange" some choices by $a_1$ and $a_2$ so that revealed preference cycles turning up in $a_1$ are resolved. Thus, unless otherwise specified in the procedure below, $\hat{a}_1(B_j)=a_1(B_j)$ and $\hat{a}_2(B_j)=a_2(B_j)$ for each budget $B_j$ (\emph{i.e.} choices remain the same when the exchange is not needed).

In what follows, let for $k=1,2,...,J-1$, ${\cal J}_k=\{k,k+1,...,J\}$ and $\mathscr{J}_k=\{{\cal J}\in \mathscr{J}: {\cal J}\ni k\mbox{ and }{\cal J}\not\ni k'\mbox{ for }k'\le k-1\}$. As we have shown in the proof of Theorem \ref{main}, $\{\mathscr{J}_k\}^{J-1}_{k=1}$ is a partition of $\{1,2,...,J\}$. Since there is no ${\cal J}$ such that $a_1,a_2\in {\cal A}^{\cal J}$, at least one of $\xi^{{\cal J}_k}\cdot a_1\geq 1$ and $\xi^{{\cal J}_k}\cdot a_2\geq 1$ holds for $k=1,2,...,J-1$. (Otherwise, since $a_1\notin {\cal A}^*$ and $a_2\notin {\cal A}^*$ are assumed in this part, $\xi^{{\cal J}_k}\cdot a_1=\xi^{{\cal J}_k}\cdot a_2=0$ implies that $a_1,a_2\in {\cal A}^{{\cal J}_k}$.) Note also that if $\xi^{{\cal J}_k}\cdot a_{m_k}\geq 1$ holds for ${m_k}\in \{1,2\}$, then there is some $B_{j_k}$ such that $a_{m_k}(B_{j_k})$ is undominated in ${\cal J}_k$. \begin{enumerate}
\item We start from $k=1$. If $a_1$ chooses a undominated patch $a_1(B_{j_1})$ in ${\cal J}_1$ for some integer $j_1\in {\cal J}_1$, then we just adjust the indices of budgets so that $j_1=1$. Thus, in this case, $\hat{a}_1(B_1)=a_1(B_1)$ and $\hat{a}_2(B_1)=a_2(B_1)$. On the other hand, if $a_1$ does not choose any undominated patch in ${\cal J}_1$, then $a_2(B_{j_1})$ is undominated in ${\cal J}_1$ for some $j_1$. (Otherwise, $a_1,a_2\in {\cal A}^{{\cal J}_1}$ holds.) In this case, in addition to adjusting indices so that $j_1=1$, we exchange the choices of $a_1$ and $a_2$ at $B_{1}$; that is, $\hat{a}_1(B_{1})=a_2(B_{1})$ and $\hat{a}_2(B_{1})=a_1(B_{1})$. This ensures that $\xi^{\cal J}\cdot \hat{a}_1\geq 1$ for all ${\cal J}\in \mathscr{J}_1$ at this stage.
\item Then, we move to $k=2$. Similar to the preceding case, at least one of $a_1$ and $a_2$ must attain a undominated patch in ${\cal J}_2$. Note that the definition of ${\cal J}_2$ must reflect the change of indices in the preceding step. If $a_1$ chooses a undominated patch $a_1(B_{j_2})$ in ${\cal J}_2$ for some integer $j_2\in {\cal J}_2$, then we adjust the indices of budgets as $j_2=2$, by which $a_1(B_2)$ is undominated in ${\cal J}_2$. In this case, $\hat{a}_1(B_2)=a_1(B_2)$ and $\hat{a}_2(B_2)=a_2(B_2)$. On the other hand, if  $a_1$ does not choose any undominated patch in ${\cal J}_2$, then $a_2(B_{j_2})$ must be a undominated patch in ${\cal J}_2$ for some integer $j_2\in {\cal J}_2$. In this case, we adjust the indices of budgets so that this $j_2$ equals to $2$ by which $a_2(B_2)$ is undominated, and then, exchange the choice made by $a_1$ with that by $a_2$ at $B_2$: $\hat{a}_1(B_2)=a_2(B_2)$ and $\hat{a}_2(B_2)=a_1(B_2)$. This ensures that $\xi^{{\cal J}}\cdot \hat{a}_1\geq 1$ for all ${\cal J}\in \mathscr{J}_2$. However, since we do not change the choices of $\hat{a}_1$ and $\hat{a}_2$ at $B_1$ from the preceding step, $\hat{a}_1(B_1)$ remains undominated in any ${\cal J}\in {\mathscr J}_1$. As a result, at this stage, it actually holds that $\xi^{{\cal J}}\cdot \hat{a}_1\geq 1$ for all ${\cal J}\in \mathscr{J}_1\cup\mathscr{J}_2$.
\item By induction, repeating the above procedure up to $k=J-1$, we obtain $\hat{a}_1$ for which $\xi^{\cal J}\cdot \hat{a}_1\geq 1$ for all ${\cal J}\in \bigcup^{J-1}_{k=1}\mathscr{J}_k$. (Similar to the preceding step, in defining ${\cal J}_k$, the changes of indices made before that point must be reflected.) However, since $\bigcup^{J-1}_{k=1}\mathscr{J}_k=\mathscr{J}$, Lemma \ref{lem_sarp} ensures that the resulting $\hat{a}_1$ is in fact an element of ${\cal A}^*$. (Recall that $\mathscr{J}$ is a family of ${\cal J}$'s with at least
two elements.) The claim that $\hat{a}_1+\hat{a}_2=a_1+a_2$ is obvious from the construction of $\hat{a}_1$ and $\hat{a}_2$; indeed, choices of patches made either by $\hat{a}_1$ or $\hat{a}_2$ are exactly those made either by $a_1$ or $a_2$.
\end{enumerate}\vspace{-0.3in}\end{proof}\vspace{0.1in}

We turn to the proof of Lemma \ref{lem_ind}. Recall that the hypothesis in the lemma implies that at least one of $a_1,a_2,...,a_m$ is outside of ${\cal A}^*$. Also, by Lemma \ref{dual_lem}, the statement is true for the case of $m=2$. Assuming that the lemma is proved up to $m-1$ ($m\geq 3$), by induction, we show that it is also true for the case of $m$.\vspace{0.1in}

\noindent
\emph{(Case 1.)} Suppose that one and only one of $a_1,a_2,...,a_m$ is outside of ${\cal A}^*$. Without loss of generality, we may let $a_1\notin {\cal A}^*$. Then, for any ${\cal J}$, it holds that $\xi^{\cal J}\cdot a_1\geq 0=D(a_1)$ and that $\xi^{\cal J}\cdot a_k\geq 1=D(a_k)$ for $k=2,...,m$. In addition, there is no ${\cal J}$ such that $\xi^{\cal J}\cdot a_k$ achieves $D(a_k)$ for all $k=1,2,...,m$, or equivalently, there is at least one $a_k$ for which $\xi^{\cal J}\cdot a_k>D(a_k)$.\footnote{Recall that, since both $\xi^{\cal J}$ and $a_k$ are binary vectors, $\xi^{\cal J}\cdot a_k$ can take a nonnegative integer value. Hence, $\xi^{\cal J}\cdot a_k>D(a_k)$ is equivalent to $\xi^{\cal J}\cdot a_k\geq D(a_k)+1$.} This immediately implies that for every ${\cal J}\in \overline{\mathscr{J}}$, \begin{align}
\xi^{\cal J}\cdot \pi=\frac{1}{m}\sum^m_{k=1}\xi^{\cal J}\cdot a_k\geq 1.
\end{align}
Theorem \ref{main} implies that $\pi$ is rationalizable, and hence, we have \begin{align}
D(\pi)=1>\frac{1}{m}\sum^m_{k=1}D(a_k)=\frac{m-1}{m}.
\end{align}

In the rest of this proof, suppose that $\{a_1,a_2,...,a_m\}$ contains at least two non-rationalizable behavioral types. \vspace{0.1in}

\noindent
\emph{(Case 2.)} Suppose that some $\{a_{\iota_1},...,a_{\iota_{m'}}\}\subset \{a_1,a_2,...,a_m\}$ ($2\le m'\le m-1$) does not have any ${\cal J}\in \overline{\mathscr{J}}$ with $a_{\iota_1},...,a_{\iota_{m'}}\in {\cal A}^{{\cal J}}$.\footnote{Note that, if there is only one irrational behavioral types, then the hypothesis here is not satisfied, since $\xi^{\overline{J}}\cdot a=D(a)$ for any $a\in {\cal A}^*$.}  We let $\{a_{\iota_1},...,a_{\iota_{m'}}\}=\{a_1,...,a_{m'}\}$, with no loss of generality, and consider $$\pi'=\frac{1}{m'}a_1+\frac{1}{m'}a_2+\dots +\frac{1}{m'}a_{m'}.$$ By appealing to the inductive assumption, it holds that\begin{align}
D(\pi')>\frac{1}{m'}\sum^{{m'}}_{k=1}D(a_k).
\end{align}

Combining the concavity of $D(\cdot)$ and the fact that $\pi$ can be written as $$\pi=\frac{m'}{m}\pi'+\frac{m-m'}{m}\left(\sum^m_{k=m'+1}a_k \right),$$ 
this immediately results in\begin{align}
D(\pi)\geq \frac{m'}{m}D(\pi')+\frac{m-m'}{m}\sum^m_{k=m'+1}D(a_k)>\frac{1}{m}\sum^m_{k=1}D(a_k),
\end{align}
as desired. \vspace{0.1in}

\noindent
\emph{(Case 3.)} Then, we deal with the case where every $\{a_{\iota_1},...,a_{\iota_{m'}}\}\subset \{a_1,a_2,...,a_m\}$ ($2\le m'\le m-1$) has some ${\cal J}\in \overline{\mathscr{J}}$ for which $a_{\iota_1},...,a_{\iota_{m'}}\in {\cal A}^{\cal J}$. Since at least two of $a_1,a_2,...,a_m$ are outside of ${\cal A}^*$, without loss of generality, we may assume that $a_1,a_2\notin {\cal A}^*$.  

We use the result of Case 2 by showing the following claim.\vspace{0.05in}

\begin{claim}\label{cl1}
There is a set of behavioral types $\{\hat{a}_1,\hat{a}_2,...,\hat{a}_m\}$ for which (i) $\sum^m_{k=1}\hat{a}_k=\sum^m_{k=1}a_k$, (ii) there is no ${\cal J}\in \overline{\mathscr{J}}$ such that  $\hat{a}_2,\hat{a}_3,...,\hat{a}_m\in {\cal A}^{\cal J}$, and (iii) $\hat{a}_k=a_k$ for $k\geq 3$.
\end{claim}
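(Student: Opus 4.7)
The plan is to declare $\hat{a}_k := a_k$ for $k \geq 3$ (giving (iii)) and to construct $\hat{a}_1, \hat{a}_2$ by a partial swap of chosen patches between $a_1$ and $a_2$: at each budget $B_k$, set $(\hat{a}_1(B_k), \hat{a}_2(B_k))$ to either $(a_1(B_k), a_2(B_k))$ or $(a_2(B_k), a_1(B_k))$. Any such swap automatically secures (i), so the substance of the claim lies in arranging (ii).

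Write $S_k = \{{\cal J} \in \overline{\mathscr{J}} : a_k \in {\cal A}^{\cal J}\}$ for each $k$, and $T = \bigcap_{k=3}^{m} S_k$. Property (ii) is equivalent to requiring $\hat{S}_2 \cap T = \emptyset$, where $\hat{S}_2 := \{{\cal J} : \hat{a}_2 \in {\cal A}^{\cal J}\}$. Applying the Case 3 hypothesis to the subsets $\{a_2, a_3, \ldots, a_m\}$ and $\{a_1, a_3, \ldots, a_m\}$ yields some ${\cal J}^* \in T \cap S_2$ and ${\cal J}^{**} \in T \cap S_1$, while the hypothesis of Lemma \ref{lem_ind} forces $T \cap S_1 \cap S_2 = \emptyset$. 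In particular, for every ${\cal J} \in T \cap S_2$, $a_1 \notin {\cal A}^{\cal J}$ together with $a_1 \notin {\cal A}^*$ gives $\xi^{\cal J} \cdot a_1 \geq 1$: $a_1$ chooses an undominated patch in ${\cal J}$ at some budget.

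For the construction I would adapt the sequential swap procedure from the proof of Lemma \ref{dual_lem}. Order the budgets $B_1, \ldots, B_J$ so as to induce the partition $\mathscr{J} = \bigcup_{k=1}^{J-1} \mathscr{J}_k$ with $\mathscr{J}_k = \{{\cal J} \in \mathscr{J}: k \in {\cal J} \text{ and } k' \notin {\cal J} \text{ for } k' < k\}$, and process $k = 1, 2, \ldots, J-1$ in turn. At step $k$, swap the choices of $a_1$ and $a_2$ at $B_k$ precisely when doing so imports an undominated patch (with respect to some ${\cal J} \in T \cap \mathscr{J}_k \cap S_2$) from $a_1$ into $\hat{a}_2$. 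Because $\{\mathscr{J}_k\}$ partitions $\mathscr{J}$, a swap at step $k$ cannot undo the undominated patches of $\hat{a}_2$ acquired at earlier steps $k' < k$, so after $J-1$ steps we have $\xi^{\cal J} \cdot \hat{a}_2 \geq 1$ for every ${\cal J} \in T \cap S_2$. If additionally $\hat{a}_2 \notin {\cal A}^*$, this forces $\hat{a}_2 \notin {\cal A}^{\cal J}$ for every such ${\cal J}$.

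The main obstacle I expect is extending this control to the entire $T$, not just $T \cap S_2$. For ${\cal J} \in T \cap S_1$ one already has $\xi^{\cal J} \cdot a_2 \geq 1$ before any swap, but one must verify no swap destroys this; for ${\cal J} \in T \setminus (S_1 \cup S_2)$, both $a_1$ and $a_2$ contribute undominated patches in ${\cal J}$, and one must ensure that at least one such patch survives in $\hat{a}_2$. Finally, if $\overline{\cal J} \in T$ (which happens precisely when $a_3, \ldots, a_m \in {\cal A}^*$), the swap must also keep $\hat{a}_2 \notin {\cal A}^*$; this should be obtainable by tightening the swap rule to preserve a revealed preference cycle inherited from $a_1$ or $a_2$, exploiting the fact that both are irrational by assumption.
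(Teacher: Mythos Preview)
Your framing is correct: fixing $\hat a_k=a_k$ for $k\geq 3$ and swapping selected patches between $a_1$ and $a_2$ automatically gives (i) and (iii), and (ii) is exactly $\hat S_2\cap T=\emptyset$. One of your stated worries is a non-issue, though. By the Case~3 hypothesis applied to the pair $\{a_1,a_2\}$ there is some ${\cal J}_0\in\mathscr{J}$ with $\xi^{{\cal J}_0}\cdot a_1=\xi^{{\cal J}_0}\cdot a_2=0$; every patch chosen by either $a_1$ or $a_2$ is then dominated in ${\cal J}_0$, so any swap preserves $\xi^{{\cal J}_0}\cdot\hat a_2=0$ and hence $\hat a_2\notin{\cal A}^*$ is automatic---no tightening of the rule is required for that.

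The genuine gap is the swap procedure itself. Your rule---process a fixed ordering and swap at $B_k$ whenever $a_1(B_k)$ is undominated in \emph{some} ${\cal J}\in T\cap\mathscr{J}_k\cap S_2$---does not deliver the claimed conclusion $\xi^{\cal J}\cdot\hat a_2\geq 1$ for \emph{every} ${\cal J}\in T\cap S_2$. Undominatedness of $a_1(B_k)$ in one such ${\cal J}$ says nothing about another ${\cal J}'\in T\cap\mathscr{J}_k\cap S_2$, and your rule provides no mechanism by which ${\cal J}'$ is handled later. The reason the procedure in Lemma~\ref{dual_lem} succeeds is that at each step the imported patch is undominated in the whole of ${\cal J}_k=\{k,\ldots,J\}$, and hence in every member of $\mathscr{J}_k$ simultaneously; here neither $a_1$ nor $a_2$ need possess such a patch, precisely because they share ${\cal A}^{{\cal J}_0}$. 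The obstacles you flag for $T\cap S_1$ and $T\setminus(S_1\cup S_2)$ are equally real and unresolved: a swap importing an $a_1$-patch may remove the only undominated $a_2$-patch for some ${\cal J}\in T\setminus S_2$, so the procedure can create new violations while repairing old ones.

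The paper avoids budget-by-budget sequencing entirely. It defines $\mathscr{J}'({\bf a})=\{{\cal J}:\xi^{\cal J}\cdot a_1\geq 1\text{ and }a_k\in{\cal A}^{\cal J}\text{ for all }k\geq 2\}$ (initially your $T\cap S_2$, but recomputed after each swap), selects a \emph{maximal} element ${\cal J}'$ with respect to set inclusion, and swaps at a single budget $j'$ where $a_1(B_{j'})$ is undominated in ${\cal J}'$. The swap visibly removes ${\cal J}'$; maximality is what blocks new entrants: if a fresh ${\cal J}''$ appeared, one shows ${\cal J}'\cup{\cal J}''$ would already have belonged to $\mathscr{J}'({\bf a})$, contradicting maximality of ${\cal J}'$. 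Thus $\mathscr{J}'$ strictly shrinks until empty, and once it is empty the invariance of $\xi^{\cal J}\cdot(a_1+a_2)$ under swaps, together with $T\cap S_1\cap S_2=\emptyset$, delivers (ii). This maximality device is the idea your plan is missing.
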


\begin{proof}
Letting ${\bf a}=\{a_1,a_2,...,a_m\}$, define $\mathscr{J}'({\bf a})\subset \overline{\mathscr{J}}$ such that\begin{align}
\mathscr{J}'({\bf a})=\left\{{\cal J}:\, \xi^{\cal J}\cdot a_1>D(a_1)\mbox{ and }\xi^{\cal J}\cdot a_k=D(a_k)\mbox{ for }k\geq 2 \right\}.
\end{align}
Note that, by the hypothesis of the current case, $\mathscr{J}'({\bf a})$ is nonempty. In what follows, we construct ${\bf a}'=\{{a}'_1,{a}'_2,...,{a}'_m\}$ such that $\sum^m_{k=1}{a}'_k=\sum^m_{k=1}a_k$ and $\mathscr{J}'({\bf a}')\subsetneq \mathscr{J}'({\bf a})$, by exchanging the choices of $a_1$ and $a_2$ with each other.

Let ${\cal J}'$ be a maximal element of $\mathscr{J}'({\bf a})$ in the sense of set inclusion. Since ${\cal J}'\in \mathscr{J}'({\bf a})$, there exists some $j'\in {\cal J}'$ for which $a_1(B_{j'})$ is undominated in ${\cal J}'$. By the construction of $\mathscr{J}'({\bf a})$, it is obvious that $a_2(B_{j'})$ is dominated in ${\cal J}'$. Then, define $a'_1$ and $a'_2$ such that $a'_1=a_1$ except that $a'_1(B_{j'})=a_2(B_{j'})$, and similarly, let $a'_2=a_2$ except that $a'_2(B_{j'})=a_1(B_{j'})$. Then, also letting $a'_k=a_k$ for $k\geq 3$, it immediately follows that $\sum^m_{k=1}{a}'_k=\sum^m_{k=1}a_k$.

By the assumption in the current case, there is some ${\cal J}$ for which $a_1,a_2\in {\cal A}^{\cal J}$. Since $a_1,a_2\notin {\cal A}^*$ is also assumed, it holds that $a'_1,a'_2\notin {\cal A}^*$. Indeed, no matter how exchanging the choices by $a_1$ and $a_2$, it is impossible to touch any undominated patch in ${\cal J}$, and hence, the resulting behavioral types cannot satisfy the necessary and sufficient condition for the rationalizability in Lemma \ref{lem_sarp}. It is also clear from our construction that $\xi^{{\cal J}'}\cdot a'_2\geq 1>0=D(a'_2)$, and hence, $a'_2\notin {\cal A}^{{\cal J}'}$. Thus, to show that $\mathscr{J}'({\bf a}')\subsetneq \mathscr{J}'({\bf a})$, it suffices to ensure that for any ${\cal J}''\notin \mathscr{J}'({\bf a})$, it cannot be an element of $\mathscr{J}'({\bf a}')$. 

In particular, we can concentrate on the case where ${\cal J}''$ is not an element of $\mathscr{J}'({\bf a})$ because of $\xi^{{\cal J}''}\cdot a_1=0$ or $\xi^{{\cal J}''}\cdot a_2\geq 1$. (If $\xi^{{\cal J}''}\cdot a_k>D(a_k)$ holds for some $k\geq 3$, then ${\cal J}''\notin \mathscr{J}'({\bf a}')$ is obvious.) In addition, if $\xi^{{\cal J}''}\cdot a_1=\xi^{{\cal J}''}\cdot a_2=0$, then $a_1,a_2,...,a_m\in {\cal A}^{{\cal J}''}$, which contradicts the hypothesis of the lemma itself. Thus, the only issue is the case of ${\cal J}''$ with $\xi^{{\cal J}''}\cdot a_2\geq 1$. For ${\cal J}''$ to be an element of $\mathscr{J}'({\bf a}')$, it has to hold that $\xi^{{\cal J}''}\cdot a'_2=0$. However, if it were to hold, then $j'\in {\cal J}''$ and it has to be the only one element for which $a_2(\cdot)$ is undominated in ${\cal J}''$. Thus, for all $j\in {\cal J}''\setminus \{j'\}$, the patch $a_2(B_j)$ is dominated by some budget within ${\cal J}''$. Gathering together with the fact that all chosen patches by $a_2$ on ${\cal J}'$ (including $a_2(B_{j'})$) are dominated in ${\cal J}'$, this implies that $\xi^{{\cal J}'\cup {\cal J}''}\cdot a_2=0$. If it also holds that $\xi^{{\cal J}'\cup {\cal J}''}\cdot a_1=0$, then $a_1,a_2,...,a_m\in {\cal A}^{{\cal J}'\cup {\cal J}''}$, contradicting the hypothesis in this lemma itself. However, if $\xi^{{\cal J}'\cup {\cal J}''}\cdot a_1>0$, then ${\cal J}'\cup {\cal J}''\in \mathscr{J}'({\bf a})$, which in turn contradicts the assumption that ${\cal J}'$ is maximal with respect to the set inclusion in $\mathscr{J}({\bf a})$. This ensures that $\mathscr{J}'({\bf a}')\subsetneq \mathscr{J}'({\bf a})$.

Then, setting ${\bf a}:={\bf a}'$, repeat this argument until $\mathscr{J}(\cdot)$ becomes the empty set. Once we get there, the resulting set of behavioral types can be adopted as $\{\hat{a}_1,\hat{a}_2,...,\hat{a}_m\}$ in the statement. Since $a'_k=a_k$ for $k\geq 3$ always holds in the procedure, the property (iii) is also satisfied.
\end{proof}

Having the preceding claim, the rest of the proof of Lemma \ref{lem_ind} is as follows. Letting $\{\hat{a}_1,\hat{a}_2,...,\hat{a}_m\}$ be a profile of behavioral types constructed in Claim \ref{cl1}, by the second requirement there, it holds that $$\pi=\frac{1}{m}\sum^m_{k=1}a_k=\frac{1}{m}\sum^m_{k=1}\hat{a}_k.$$ This in turn implies that\begin{align}
D(\pi)>\frac{1}{m}\sum^m_{k=1}D(\hat{a}_k)\geq \frac{1}{m}\sum^m_{k=1}D({a}_k),
\end{align}
where the first inequality holds, because the statement is true in Case 2, while the second inequality holds, because $D(a_1)=D(a_2)=0$ is assumed and $\hat{a}_k=a_k$ for $k\geq 3$. \qed \vspace{0.1in}

\vspace{-0.1in}

\end{document}